\newtheorem{proposition}{Proposition}
\newtheorem{lemma}{Lemma}
\newtheorem{proof}{Proof}
\newtheorem{corollary}{Corollary}
\newcommand{\bm}[1]{\mbox{\boldmath{$#1$}}}
\DeclareMathOperator*{\argmax}{arg\,max}
\renewcommand\nomgroup[1]{%
  \item[\bfseries
  \ifstrequal{#1}{P}{Physics Constants}{%
  \ifstrequal{#1}{N}{Number Sets}{%
  \ifstrequal{#1}{O}{Other Symbols}{}}}%
]}
\begin{document}
\title{\huge{Graph Convolutional Neural Networks for\\ Physics-Aware Grid Learning Algorithms}}
\IEEEaftertitletext{\vspace{-2.2\baselineskip}}
\IEEEoverridecommandlockouts
\author{\IEEEauthorblockN{Tong~Wu,~\IEEEmembership{Member,~IEEE}}, \IEEEauthorblockN{Ignacio~Losada~Carre\~no,~\IEEEmembership{Member,~IEEE}},\\
\IEEEauthorblockN{Anna~Scaglione,~\IEEEmembership{Fellow,~IEEE}},
\IEEEauthorblockN{Daniel Arnold,~\IEEEmembership{Member,~IEEE}},\\
%
}

\maketitle
\newcommand{\norm}[1]{\left\lVert#1\right\rVert}
\newcommand*\abs[1]{\lvert#1\rvert}
\begin{abstract}
This paper proposes novel architectures for spatio-temporal graph convolutional and recurrent neural networks whose structure is inspired by the physics of power systems. 
The key insight behind our design consists in deriving
the so-called
graph shift operator (GSO), which is the cornerstone of Graph Convolutional Neural Network (GCN) design, from the power flow equations.
We demonstrate the effectiveness of the proposed architectures in two applications: in forecasting the power grid state and in finding a stochastic policy for foresighted voltage control using deep reinforcement learning. Since our design can be adopted in single-phase as well as three-phase unbalanced systems, we test our architecture in both environments. For state forecasting experiments we consider the single phase IEEE 118-bus case systems; for voltage regulation, we illustrate the performance of deep reinforcement learning policy on the unbalanced three-phase IEEE 123-bus feeder system. In both cases the physics based GCN learning algorithms we propose outperform the state of the art. 
\end{abstract}

\begin{IEEEkeywords}
GCN,  Deep Reinforcement Learning, Cyber-Physical   Attacks.
\end{IEEEkeywords}
\vspace{-0.4cm}
\section*{Nomenclature}
\vspace{-0.2cm}
\textit{Abbreviation}
\addcontentsline{toc}{section}{Nomenclature}
\begin{IEEEdescription}[\IEEEsetlabelwidth{$V_1,V_2,V_3$}]\small
\item[GCN] Graph convolutional neural network.
\item[GRN] Graph recurrent neural network.
\item[GSO] Graph shift operator.
\item[GS] Graph signal.
\item[GSP] Graph signal processing.
\item[FNN] Fully connected neural network.
\item[CNN] Convolutional neural network.
\item[RNN] Recurrent neural network.
\item[DRL] Deep reinforcement learning.
\item[PSSE] Power system state estimation.
\item[PSSF] Power system state forecasting.
\end{IEEEdescription}

\textit{Sets}
\addcontentsline{toc}{section}{Nomenclature}
\begin{IEEEdescription}[\IEEEsetlabelwidth{$V_1,V_2$}]
\item[$\mathcal{N},\mathcal{E}$] Sets of grid buses $\mathcal{N} = \{1, \cdots, N\}$ and lines.
\item[$\mathcal{N}_s$]  subset of single-phase buses with smart inverters.
\item[$\mathcal{P}_{mn}$]   Phases of line $(m, n) \in \mathcal{E}$.
\item[$\mathcal{P}_{m}$]   Phases of node $n \in \mathcal{N}$.
\end{IEEEdescription}
\textit{Variables}
\addcontentsline{toc}{section}{Nomenclature}
\begin{IEEEdescription}[\IEEEsetlabelwidth{$V_1,V2$}]

\item[$\bm{v}_n$]  $\bm{v}_n = [v_{n_\phi} | \phi \in \mathcal{P}_n] \in \mathbb{C}^{\abs{\mathcal{P}_n}  \times 1}$ with phase $\angle{{v}}_{n_\phi}$ and magnitude $\abs{v_{n_\phi}}$.
\item[$\bm{v},\bm{i}$] Vectors  of all voltage current injections. 
\item[$\bm{s}$] Vector  of all apparent power injections, $\bm{s} = \bm{p} + \mathfrak{j}\bm{q}$. 
\end{IEEEdescription}
\textit{Operators}
\addcontentsline{toc}{section}{Nomenclature}
\begin{IEEEdescription}[\IEEEsetlabelwidth{$V_1,V_2, V_3$}]
\item[$\bm A^T, \bm A^H$ ] The transpose and Hermitian of matrix $\bm A$.
\item[$D(\bm A)$]  The vector of the diagonal elements of $\bm A$.
\item[$diag(\bm a)$]  A diagonal matrix with diagonal entries from $\bm a$.
\item[ $\bm A\circ\bm B$] Hadamard product (entry by entry product).
\item[ $(\bm A)^*$]  Conjugate of a complex vector or matrix.
\end{IEEEdescription}

\section{Introduction}
\subsection{Background and Motivation}

The access to high-quality grid sensors data, and phasor measurement units (PMUs) in particular, has prompted a lot of interest in applying advanced learning algorithms to address grid inference and control problems. The main advantage of learning techniques, when compared to regression or optimization problems that purely rely on the physics, lies in their ability to internalize statistical patterns in the training data that are not captured by the physical constraints only.  
On the other hand, while many {\it black-box} learning approaches respond well to the challenge, accounting for the physical equations explicitly, rather than learning them as a pattern, reduces the number of parameters in the model, mitigating  overfitting problems. This is why not all Neural Networks (NN) have the same architecture and, in particular, both time-series and images are best processed by Convolutional or Recursive NN, leveraging the shift or state invariance of the data to reduce the parameter space and increase generalization ability.

The framework of Graph Signal Processing (GSP) has emerged as the the most promising approach to generalize these architectures to data that have the irregular support of a network. GSP provides a natural representation  for both the data (node attributes) and the underlying structure (edge attributes) \cite{dong2020graph}.
Its application to grid measurements has recently spurred significant interest \cite{ramakrishna2021gridgraph}. 
However, the linear models of Graph filters have limited capability to learn the possibly complex mappings that are needed for classification, forecasting and for the approximation of optimum control policies when compared to neural network models. In fact,  on several applications that include high-dimensional data from a network structure, GCNs  have shown to have the best generalization capabilities \cite{jiang2021emphatic}.
Motivated by the promise of AI applications to  power system data, the overarching goal of this paper is to develop state of the art GCN architectures for inference and decision making that best capture the spatial and temporal features of grid data that derive from the AC power-flow constraints. 
\vspace{-0.3cm}
\subsection{Related Works}
%
We first highlight the related prior research on grid GSP and GCN and then the literature on the two applications we consider to showcase the performance benefits of the proposed schemes, namely: power system state estimation and forecasting, and reinforcement learning for voltage control.

\subsubsection{Graph Convolutional Neural Network}
A handful of papers have so far  successfully applied GCN to distribution systems' management, considering applications that include fault localization \cite{chen2019fault}, distribution  system  state  estimation \cite{zamzam2020physics}, and synthetic data generation \cite{liang2020feedergan}.
 GCNs are a generalization of CNN, aimed at capturing the impact that the network connectivity has on the patterns of the data associated to the network nodes. The foundation of GCN lies in the  GSP definition of graph filters and of what is referred to as the Graph Shift Operator (GSO).  In \cite{chen2019fault}, the GSO is defined as the weighted adjacent matrix, constructed based on the physical distance between nodes.  In \cite{zamzam2020physics}  the authors prune the weights of a conventional Fully connected Neural Network (FNN) based on the power grid topology, without considering the grid lines admittances.
In the paper \cite{liang2020feedergan} the GSO is constructed as adjacent matrix that captures the correlation among historical data. All these approaches are not directly considering the electrical characteristics of the overhead power lines.
In prior work \cite{ramakrishna2021gridgraph} we provided ample evidence that the right framework to apply GSP for grid signals should be rooted in the basic network analysis that has been used to model power systems signals for decades \cite{ribeiro2013power}. In fact, Ohm's law is the obvious driver of the correlation in the state vector of the grid, and the GSO can be derived from first principles from the grid physics, which suggests that the right GSO is the admittance matrix itself. While \cite{ramakrishna2021gridgraph} has shown the benefits of using complex graph filtering methods to address a number of inference problems, physics-based Graph Neural Networks architectures are still missing.  We note that Pytorch and Tensorflow \cite{NEURIPS2019_9015, tensorflow2015} operate on real valued nodal data, which are not compatible with the GSO derivation in \cite{ramakrishna2021gridgraph}.  

This paper addresses two gaps left by the prior art. First, we derive a physics inspired GSO  based on the power flow equations in the real domain. Using that we defined two efficient spatio-temporal graph neural network architectures: Graph Convolutional Neural Networks (GCN) and Graph Recurrent Neural networks (GRN). Second, we extend these architectures to the unbalanced three-phase power systems, unleashing the power of physics inspired AI methods to distribution systems. 

\subsubsection{Power System State Estimation and Forecasting} There is a vast literature on Power System State Estimation (PSSE)
\cite{zhang2019real}. The use of graph neural networks is very recent \cite{kundacina2022state, hossain2021state}. Neither papers derived the GSO from the power flow equations (e.g. adjacency matrix used in \cite{kundacina2022state}). Also,  \cite{kundacina2022state, hossain2021state} ignored the temporal dependencies of the voltage phasors and focused on state reconstruction, not forecasting. 
Power Systems State Forecasing (PSSF) has so far been pursued via a single-hidden-layer NN in \cite{do2009forecasting1, do2009forecasting2}. Because the number of FNN parameters grows linearly with the length of the input sequences, the proposed methods are prone to overfitting. 
In  Fig. \ref{estimation_forecast} of Section V, we show that our method attains about an order of magnitude improvement in accuracy over the FNN in \cite{do2009forecasting1, do2009forecasting2}.

\subsubsection{Reinforcement Learning for Voltage Control}
Voltage control problems can be modeled as mixed-integer nonlinear  programs that include  an optimal power flow; they are \textit{nonconvex} and \textit{NP-hard} \cite{zhang2020deep} and, therefore, impractical for a real-time implementation.  In recent years several authors have explored Deep Reinforcement Learning (DRL) methods as alternatives to brute force optimization, to search via training approximately optimal policy functions, parametrized as a neural network. 
Existing DRL methods  for Volt-VAR control in distribution grids are broadly classified as value-based \cite{vlachogiannis2004reinforcement, xu2019optimal, yang2019two, duan2019deep} and policy-based RL algorithms \cite{wang2020data, haarnoja2018soft, cao2020multi}. Unfortunately, DRL algorithms can become unstable when combining function approximation, off-policy learning, and bootstrapping (a combination referred to as the deadly triad \cite{van2018deep}). 
Many authors resorted to FNN or CNN architectures which, as we discussed previously, are over-parametrized in their feature extraction layers and, therefore, likely to trigger the {\it deadly triad} of DRL,  i.e., ending up in a lot of instabilities, or no convergence  
\cite{zhang2020deep}. 
%
 
%
%
%
%
Similar conclusions apply to the adversarial DRL approach for Volt-VAR control proposed in \cite{liu2020two} for distribution grids. Very recently, \cite{zhao2021learning, lee2021graph} considered the graph correlation of voltage phasors in their DRL design, but ignored the temporal correlation of their time series.
Note that \cite{zhao2021learning, lee2021graph} require the full system state. 
In fact, all aforementioned algorithms require access to measurements of the full state of the system as an input \cite{liu2020two, gao2021consensus}. Even when the state is observable, it is hard to scale these methods to work with large-scale networks with high-dimensional features \cite{sun2021two, zhang2020deep}. This motivates the derivation of a reduced GSO that contracts the network to the buses that are controlled.  

%
%


\vspace{-0.3cm}
\subsection{Contributions and Organization}
Our main contributions are summarized next:
\begin{itemize}
	\item We develop novel physics-aware  Graph Convolutional Neural networks  and Graph Recusive Neural networks   architectures that are applicable to single and  three-phase unbalanced power systems. In each layer the parameters of Chebyshev graph-temporal filters capture the spatio-temporal features of the grid signals. Architecturally, the main novelty is in deriving a real-valued GSO from the physics of power flow equations.
	
    
 	\item To deal with sparse deployments of PMUs, or simply for scalability we show that one can use the  kron-reduced network GSO instead of the full GSO. 
 	\item We demonstrate in two case studies how the proposed architectures outperform the state of the art in PSSE and PSSF, as well as in Volt-Var control.

\end{itemize}

The rest of the paper is organized as follows. In Section II, we briefly review the key notions of GSP, setting the stage in Section III where we derive the physics inspired GSO and introduce our graph neural networks architectures. 
In Section IV, we describe two applications of the proposed GCN and GRN frameworks that are tested numerically in Section V. Finally, we conclude the paper in Section VI.

\section{A Brief Review of Graph Signal Processing}
To make the paper self-contained,  we first review the basic theory of  GSP  (more details can be found e.g. in \cite{ramakrishna2021gridgraph})
%
for a general graph $\mathcal{G} = (\mathcal{V}, \mathcal{L})$, with vertex set $\mathcal{V}$ and edge set $\mathcal{L}$. The concepts defined here will be applied to three-phase distribution network whose graph topology is $\mathcal{D} = (\mathcal{N}, \mathcal{E})$, where the node $n_{\phi}$ of bus $n$ on phase $\phi$ in $\mathcal{D}$ corresponds to the $i^{th} \in \mathcal{V}$ node of $\mathcal{G}$ and the edges are the transmission lines connecting the buses. 
A graph signal $\bm{x}  \in \mathbb{R}^{\abs{\mathcal{V}}}$ (which, in the grid, is mainly the state vector) is a vector indexed by the network nodes.
The set $ \mathcal{N}_i$ denotes the subset of nodes connected to node $i$, i.e. node $i$'s neighborhood.  A GSO is a matrix $\mathbf{S}  \in \mathbb{R}^{\abs{\mathcal{V}} \times \abs{\mathcal{V}}}$ that linearly combines graph signal neighbors' values. Almost all operations including filtering, transformation and prediction are directly related to the GSO \cite{ramakrishna2021gridgraph} which generalizes the  $s$ variable representing the derivative in the Laplace domain for signals in time. Consistent with the intuition that it should operate as a differential operator, the GSO, denoted by  $\mathbf{S}\in \mathbb{R}^{\abs{\mathcal{V}} \times \abs{\mathcal{V}}}$, is usually chosen as a graph weighted Laplacian:
\begin{equation}
\begin{split}\small
[\mathbf{S}]_{ij} =  \begin{cases}
\sum_{k\in{\mathcal{N}_i}} S_{i,k}, & i = j, \\
-S_{i,j}, & i \neq  j.
\end{cases}
\end{split}
\end{equation}
In this work, we focus on real symmetric GSOs, i.e., $\mathbf{S} = \mathbf{S}^\top$ that are appropriate for power grid applications. 
A graph filter is a linear matrix operator $\mathcal{H}(\mathbf{S})$, function of the GSO, that operates on graph signals as follows  
\begin{align}\label{gs}
    \bm{w} = \mathcal{H}(\mathbf{S}) \bm{x}.
\end{align}
What defines the dependency of $\mathcal{H}(\mathbf{S})$ on the GSO is that is that $\mathcal{H}(\mathbf{S})$ must be shift-invariant (like a linear time invariant filter in the time domain), i.e. $\mathcal{H}(\mathbf{S})\mathbf{S}=\mathbf{S}\mathcal{H}(\mathbf{S})$. This is possible if and only if $\mathcal{H}(\mathbf{S})$ is a matrix polynomial
\footnote{Note that the graph filter order $K$  can be infinite.}:
\begin{equation}
\begin{split}\label{lsi}
 \mathcal{H}(\mathbf{S}) = \sum_{k=0}^{K} h_k  \mathbf{S}^k.
\end{split}
\vspace{-0.2cm}
\end{equation}
Let the eigenvalue decomposition be $\mathbf{S} = \mathbf{U}\bm{\Lambda} \mathbf{U}^\top$, where $\bm{\Lambda}$ is a diagonal matrix with eigenvalues $\lambda_1\leq  \dots\leq\lambda_{\abs{\mathcal{V}}}$. Since the GSO $\mathbf{S}$ is symmetric
$\mathbf{U}$ is unitary and is the basis for Graph Fourier Transform (GFT). The GFT of a graph signal is, therefore, $\tilde{\bm x}=\mathbf{U}^\top\bm x$ and the eigenvalues $\lambda_{\ell}, \ell=1,\ldots, \abs{\mathcal{V}}$ are the \textit{graph frequencies}. 
From \eqref{lsi} it follows that:
\begin{equation}
\begin{split}
\mathcal{H}(\mathbf{S}) = \mathbf{U} \bigg(\sum_{k=0}^{K} h_{k} \Lambda^k \bigg) \mathbf{U}^{-1}.
\end{split}
\vspace{-0.2cm}
\end{equation}
The matrix $\sum_{k=0}^{K} h_{k} \Lambda^k$  is a diagonal, with $i^{th}$ entry $\tilde{h}(\lambda_i) \triangleq \sum_{k=0}^{K} h_k \lambda^k_i $.  Hence, $\tilde{\bm h}=[\tilde{h}(\lambda_1),\ldots,\tilde{h}(\lambda_{|\cal V|})]$ is the transfer function for graph filters. In the GFT domain this yields:
\begin{equation}
\begin{split}
\bm{w} = \mathcal{H}(\mathbf{S})\bm{x}  \iff \tilde{\bm{w}} = \tilde{\bm{h}} \circ \tilde{\bm{x}}.
\end{split}
\end{equation}
For time series of graph signals $\{\bm{x}_t\}_{t\ge 0}$ one can use graph temporal filters models:
\begin{equation}
\bm{w}_t  = \sum_{\tau = 0}^t \mathcal{H}_{t - \tau}( \mathbf{S}  ) \bm{x}_\tau~~~~\mathcal{H}_t(\mathbf{S}) = \sum_{k=0}^K h_{k,t} \mathbf{S}^k,
\end{equation}
and harness DSP tools, defining a combined GFT and $z-$transform for their analysis:
\begin{equation}
\begin{split}
\mathbf{X}(z) = \sum_{t=0}^{T-1} \bm{x}_t z^{-t}, ~~~\tilde{\mathbf{X}}(z) = \mathbf{U}^\top\mathbf{X}(z).
\end{split}
\end{equation}
where $T$ is the length of the graph signal time series. 
In particular, for filter of order $T$,  $\mathbf{S} \otimes z$ is the graph temporal GSO and a graph Spatio-Temporal filter is defined as follows:
\begin{eqnarray}
\mathcal{H}(\mathbf{S} \otimes z) = \sum_{k=0}^K H_k(z) \mathbf{S}^k,~~
 H_k(z)=\sum_{t=0}^{T-1} h_{k,t}z^{-t}
\end{eqnarray}
i.e. $H_k(z) $ is the $z-$transform of the filter coefficients $h_{k, t}$.
 In the $z$-domain the input output relationship is:
\begin{equation}
\mathbf{W}(z) = \mathcal{H}(\mathbf{S} \otimes z) \mathbf{X}(z),
\end{equation}
 The graph-temporal joint transfer function is: 
\begin{equation}
\begin{split}\label{stkernal}
\mathbb{H}(\bm \Lambda, z) = \sum_{k=0}^K \sum_{t=0}^{T-1} h_{k, t} \bm \Lambda^k z^{-t}.
\end{split}
\end{equation}
which is a diagonal matrix.
Denoting by $\tilde{\mathbf{X}}(z)=\mathbf{U}^\top \mathbf{X}(z)$, the input-output relationship in the combined GFT-$z$-domain:
\begin{equation}
 \tilde{\mathbf{W}}(z) =\mathbb{H}(\bm \Lambda, z) \tilde{\mathbf{X}}(z).
\end{equation}

\section{Graph Convolutional Neural Networks Architectures for Power Grid Signals}

In the left of  Fig. \ref{fig:CGCN_RGCN}, we  illustrate  differences of GCN, CNN and FNN applied to the same graph signal with specific neuron structures for each time instant. Observe that GCNs are  generalizations of CNNs where time-series filters are replaced by application-dependent graph temporal graph filters.
In fact, in a ST-GCN, the weights $h_{k, t}$ are the parameters learnt during training in the feature extraction layers  \cite{defferrard2016convolutional}. 
In FNNs, instead, one may train matrices with arbitrary weights, and therefore the parameters grow in the order of the number of nodes and filter memory squared. Next, we will introduce the physics-based derivation of the GSO for the grid GCN.
\begin{figure*}[!htb]
	\centering
	\includegraphics[width=2.00\columnwidth]{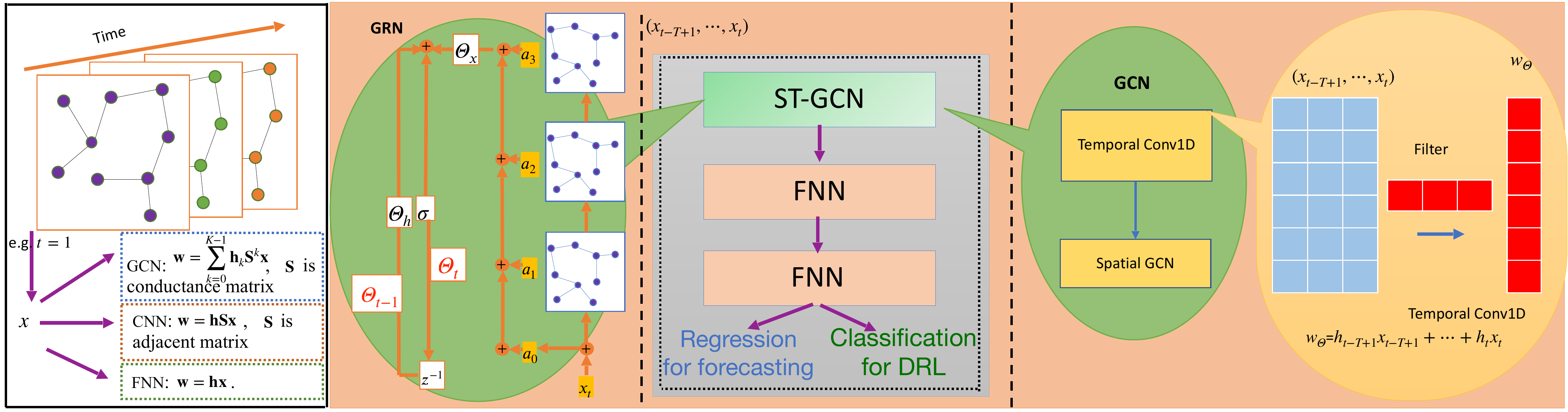}
	\caption{(Left) Information flow of the power systems, (middle) the GRN  structure achieved by the RNN and  GCN blocks  and (Right)  the  GCN structure.}
	\label{fig:CGCN_RGCN}
	\vspace{-0.7cm}
\end{figure*}
%

\vspace{-0.3cm}
\subsection{Real-Valued Grid Graph System Operator}
A physics-inspired framework for Grid-GSP was proposed in \cite{ramakrishna2021gridgraph} to provide an interpretation for the spatio-temporal properties of voltage phasor measurements by utilizing the admittance matrix as graph filters.  GCNs tools currently takes only real-valued inputs \cite{ruiz2021graph}.  Next we show how to derive a real-valued, physics inspired, GSO from the power flow equations that can be used to capture the features of 
the real graph signal components represented by the pairs of voltage magnitude $\abs{v_{n_{\phi}}}$ and voltage phase $\varphi_{n_{\phi}}$ vectors. Let $\bm{s} = \bm{p} + \mathfrak{j} \bm{q}$ be the vector of net apparent power at buses ($\bm{s} = [\bm{s}_1^\top, \cdots, \bm{s}_{\abs{\mathcal{N}}}^\top]^\top $),  with the $n^{th}$ entry  $\bm{s}_n = \bm{p}_n + \mathfrak{j} \bm{q}_n, \bm{s}_n \in \mathbb{C}^{ \abs{\mathcal{P}_n}\times 1}$. Further, let $\bm{v}$ and $\abs{\bm{v}}$ be the vectors of bus voltage phasors and magnitudes, respectively, with $\bm{v} \in \mathbb{C}^{ {\sum_{n\in \mathcal{N}} \abs{\mathcal{P}_n}}\times 1}$ and $\abs{\bm{v}} \in \mathbb{R}_+^{^{ {\sum_{n\in \mathcal{N}} \abs{\mathcal{P}_n}}\times 1}}$, and let $\bm{i} \in \mathbb{C}^{^{ {\sum_{n\in \mathcal{N}} \abs{\mathcal{P}_n}}\times 1}}  ~\textnormal{and}~\abs{\bm{i}}\in \mathbb{R}_+^{^{ {\sum_{n\in \mathcal{N}} \abs{\mathcal{P}_n}}\times 1}}$ be the vectors of net bus current phasors and magnitudes, respectively: 
\begin{eqnarray}
    &v_{n_\phi} = \abs{v_{n_\phi}}~e^{\mathfrak{j} \angle{v_{n_\phi}}},~
    i_{n_\phi} =  \abs{i_{n_\phi}}~e^{\mathfrak{j} \angle{i_{n_\phi}}},~ \forall {n} \in \mathcal{N}, {\phi} \in \mathcal{P}_n,\notag\\
 &\bm{i} = \mathbf{Y} \bm{v},  \label{eq:node_injection}
\end{eqnarray}
where $\mathbf{Y}$ in \eqref{eq:node_injection} is a block matrix of dimensions $\sum_{n\in \mathcal{N}} \abs{\mathcal{P}_n} \times \sum_{n\in \mathcal{N}} \abs{\mathcal{P}_n}$ and $\mathbf{B}$ is the susceptance matrix. More specifically, the blocks in $\mathbf{Y}$ are:
\begin{enumerate}
    \item the matrices $ \mathbf{Y}_{mn}$, occupying the $\abs{\mathcal{P}_{mn}} \times \abs{\mathcal{P}_{mn}}$ off-diagonal block corresponding to line $(m, n) \in \mathcal{E}$; and,
    \item the $\abs{\mathcal{P}_{mn}} \times \abs{\mathcal{P}_{mn}}$ diagonal block corresponding to node $n\in \mathcal{N}$ with $\mathcal{N}_m = \{n|(m,n)\in\mathcal{E}\}$:
 \begin{align}
 [\mathbf{Y}]_{\mathcal{P}_n, \mathcal{P}_n} = \sum_{m\in \mathcal{N}_n} \left( \frac{1}{2} \mathbf{Y}^{s}_{mn} + \mathbf{Y}_{mn}  \right)
 \end{align}
\end{enumerate}
\vspace{-0.1cm}
%
$\mathbf{Y}^{s}_{mn}$ is the shunt element. 
As first noted in \cite{ramakrishna2019modeling} Ohm's law allows us to view voltage as the output {\it low-pass} filter by ${\bm v} ={\mathbf{Y}}^{-1} \bm i$ (an integrator),  implying that $\mathbf{Y}$ is an appropriate GSO.
The first step to obtain the GSO in the real domain,  is to express the three-phase power flow equations as follows: 
\begin{align}
&\bm{i}_{n} =  \sum_{m\in \mathcal{N}_n}\left[\Big(\frac{1}{2} \mathbf{Y}^{s}_{mn} + \mathbf{Y}_{mn}^{(n)} \Big) \bm{v}_n +  \mathbf{Y}_{mn}^{(m)}\bm{v}_m\right]\notag
\end{align}
Note that, in general, for distribution lines $\mathbf{Y}_{mn}^{(m)} = - \mathbf{Y}_{mn}^{(n)} $ with the exception of transformer or regulators.  In the following analysis, we omit the influence of transformer or regulators and assume $\mathbf{Y}_{mn}^{(m)} = - \mathbf{Y}_{mn}^{(n)} $. 
The power flowing from bus $n\in\mathcal{N}$ to bus $m\in\mathcal{N}$ is:
\begin{align}
&\bm{s}_{n} = {D}\left(\bm{v}_n \bm{i}_{n}^H\right),~~~~\forall \phi \in \mathcal{P}_n, n\in \mathcal{N}.
\end{align}
Assuming that the susceptance $\mathbf{B}$ dominates over the conductance, denoting the imaginary parts of  the matrices $\mathbf{Y}^{s}_{mn}$, $\mathbf{Y}_{mn}^{(n)}$ and $\mathbf{Y}_{mn}^{(m)}$ respectively by $\mathbf{B}^{s}_{mn}$, $\mathbf{B}_{mn}^{(n)}$ and $\mathbf{B}_{mn}^{(m)}$ (which are symmetric), we have:
\begin{align}\label{sinj}
\bm{s}_{n} \!\approx \!\! 
\sum_{m\in \mathcal{N}_n}\!\!\!-\mathfrak{j}
D\bigg(\bm{v}_n\bm{v}_{n}^H
\Big(\frac{1}{2} \mathbf{B}^{s}_{mn} \!\!+ \!\mathbf{B}_{mn}^{(n)}  \Big)\!\!+ 
\bm{v}_n\bm{v}_{m}^H \mathbf{B}_{mn}^{(m)}\bigg).
\end{align}
From \eqref{sinj} we obtain an approximation of the power flow equations that describes the dependence between the active and reactive power on  $\abs{\bm{v}_n}$ and $\angle{{\bm{v}}_n}$  by approximating quadratic terms $\bm{v}_n\bm{v}_{n}^H$  and $\bm{v}_n\bm{v}_{m}^H$  as follows:

\begin{lemma}
With a first order expansion of the phase term the quadratic term $\bm{v}_n\bm{v}_{m}^H$ can be approximated as:
\begin{align}
&\bm{v}_n\bm{v}_{m}^H \approx \!\!\!
&\left(\mathbb{1}\mathbb{1}^\top\!  +\!\mathfrak{j}(\bm{\varphi}_n\mathbb{1}^\top \!\!\!-\! \mathbb{1}\bm{\varphi}_m^\top )\right) \circ \Big(\text{diag}(\abs{\bm{v}_n})
	\Gamma \text{diag}(\abs{\bm{v}_m})\Big)\notag
\end{align}
where $\bm{\varphi}_n$  is re-centered  by $\bm{\varphi}^\top_n = [{\varphi}_{n_a}, {\varphi}_{n_b}, {\varphi}_{n_c}] \triangleq   [\angle{v}_{n_a}, \angle{v}_{n_b}+\frac{2\pi}{3}, \angle{v}_{n_c}-\frac{2\pi}{3}] $, and
$\Gamma$ is expressed as:
\begin{equation}
\begin{split}\small
 [\Gamma]_{k n} = e^{\mathfrak{j}\frac{2(k - n)\pi}{3}} = [\Gamma_c]_{kn}  + \mathfrak{j}[\Gamma_s]_{kn},~ k,n \in\{0,1,2\}\notag
\end{split}
\end{equation}
\begin{enumerate}
    \item In the active power equation the dominant term is:
\begin{align}
\bm{v}_n\bm{v}_{m}^H \approx \mathfrak{j}(\bm{\varphi}_n\mathbb{1}^\top \!\!\!-\! \mathbb{1}\bm{\varphi}_m^\top )\label{approx1}
\end{align}
\item In the reactive power equation the dominant term is:
\begin{align}
\bm{v}_n\bm{v}_{m}^H \approx \text{diag}(\abs{\bm{v}_n})
	\Gamma \text{diag}(\abs{\bm{v}_m})\label{approx2}
\end{align}
\end{enumerate}
Replacing $m$ with $n$ in \eqref{approx1} and \eqref{approx2}, we have the approximation for $\bm{v}_n\bm{v}_{n}^H$ for the active and reactive power injections.
\end{lemma}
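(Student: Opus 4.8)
The plan is to establish the claimed factorization entrywise and then read it back as a Hadamard product. First I would write the $(k,\ell)$ entry of the rank-one matrix $\bm{v}_n\bm{v}_m^H$ in polar form, namely $[\bm{v}_n\bm{v}_m^H]_{k\ell} = \abs{v_{n_k}}\,\abs{v_{m_\ell}}\, e^{\mathfrak{j}(\angle v_{n_k} - \angle v_{m_\ell})}$. Substituting the re-centering relations $\angle v_{n_a} = \varphi_{n_a}$, $\angle v_{n_b} = \varphi_{n_b} - \tfrac{2\pi}{3}$, $\angle v_{n_c} = \varphi_{n_c} + \tfrac{2\pi}{3}$ (which make $\bm{\varphi}_n$ the near-zero residual angles at a near-balanced operating point), the exponent splits into a residual part $\varphi_{n_k} - \varphi_{m_\ell}$ and a fixed $120^\circ$ offset. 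I would then factor the exponential into a \emph{residual-phase} term $e^{\mathfrak{j}(\varphi_{n_k} - \varphi_{m_\ell})}$ and a \emph{nominal-shift} term whose collection over all $(k,\ell)$ reproduces exactly the entries of $\Gamma$. Pulling the magnitudes $\abs{v_{n_k}},\abs{v_{m_\ell}}$ into diagonal factors, the product of the magnitude and nominal-shift parts is identified with the $(k,\ell)$ entry of $\text{diag}(\abs{\bm{v}_n})\,\Gamma\,\text{diag}(\abs{\bm{v}_m})$.

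The second step is the \emph{first-order} expansion that gives the lemma its name. Since the re-centered angles are small near a balanced point, I would expand $e^{\mathfrak{j}(\varphi_{n_k}-\varphi_{m_\ell})} \approx 1 + \mathfrak{j}(\varphi_{n_k}-\varphi_{m_\ell})$, dropping the $O(\varphi^2)$ remainder. The constant $1$ populates the all-ones matrix $\mathbb{1}\mathbb{1}^\top$, while $\varphi_{n_k}-\varphi_{m_\ell}$ is recognized as the $(k,\ell)$ entry of $\bm{\varphi}_n\mathbb{1}^\top - \mathbb{1}\bm{\varphi}_m^\top$. Collecting the per-entry identity back into matrix form, and noting that multiplying the residual-phase factor against the magnitude/nominal-shift factor entrywise is precisely a Hadamard product, yields the displayed approximation.

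The two sub-claims then follow by substituting this factorization into the power-flow approximation \eqref{sinj} and tracking real and imaginary parts. Because $\bm{s}_n$ carries the global prefactor $-\mathfrak{j}$, the active power $\bm{p}_n = \text{Re}(\bm{s}_n)$ is driven by the \emph{imaginary} part of $\bm{v}_n\bm{v}_m^H$ (weighted by the real susceptance blocks), whereas the reactive power $\bm{q}_n = \text{Im}(\bm{s}_n)$ is driven by its \emph{real} part. Writing $\Gamma = \Gamma_c + \mathfrak{j}\Gamma_s$ and using that the magnitude deviations are $O(1)$ while the residual angles are $O(\varphi)$, I would retain only the leading contribution in each equation: the angle-difference term $\mathfrak{j}(\bm{\varphi}_n\mathbb{1}^\top - \mathbb{1}\bm{\varphi}_m^\top)$ dominates \eqref{approx1}, and the magnitude term $\text{diag}(\abs{\bm{v}_n})\,\Gamma\,\text{diag}(\abs{\bm{v}_m})$ dominates \eqref{approx2}. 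The diagonal-block case obtained by replacing $m$ with $n$, asserted in the last sentence, follows from the identical computation.

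I expect the main obstacle to be the careful bookkeeping in the third step rather than the factorization itself. Selecting the ``dominant'' term is where the classical active--reactive / angle--magnitude decoupling is invoked, and making it precise requires simultaneously combining the $-\mathfrak{j}$ prefactor, the split $\Gamma = \Gamma_c + \mathfrak{j}\Gamma_s$, and the relative orders of the magnitude versus angle deviations, so that the discarded cross-terms (such as $\varphi$ against $\Gamma_s$) are genuinely higher order; one must also confirm that the first-order Taylor remainder is uniformly $O(\varphi^2)$ so the neglected terms do not accumulate across the neighborhood sum over $m \in \mathcal{N}_n$.
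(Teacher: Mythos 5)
Your proposal is correct and follows essentially the same route as the paper: the paper's proof likewise re-centers the phases around the balanced $120^\circ$ offsets (via the diagonal matrices $\Psi_n^{(3)}$, whose conjugation of $\mathbb{1}\mathbb{1}^\top$ produces $\Gamma$), applies the first-order expansion $e^{\mathfrak{j}x}\approx 1+\mathfrak{j}x$ to the residual-phase matrix to get the stated Hadamard factorization, and then obtains the two dominance claims exactly as you describe---by sorting real and imaginary parts under the $-\mathfrak{j}$ prefactor and imposing $\abs{\bm{v}_n}\approx\mathbb{1}$ for the active equation and $\bm{\varphi}_n\mathbb{1}^\top-\mathbb{1}\bm{\varphi}_m^\top\approx\mathbf{0}$ for the reactive one. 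The only difference is presentational: you verify the factorization entrywise, whereas the paper packages the same manipulations as matrix identities (its Corollary C1 and Propositions P4/P5) applied to the outer product $\bm{v}_n\bm{v}_m^H$.
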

\begin{proof} See the proof in the appendix.
\end{proof}

Applying the approximations in Lemma 1 to the expression in \eqref{sinj}  we obtain the physics in inspired GSO introduced in the the following Proposition:
\begin{proposition}
Let us define the following matrices:
\begin{eqnarray}
&\hat{\mathbf{B}}^{s}_{mn}\triangleq \Gamma_c \circ {\mathbf{B}}^{s}_{mn},
&\hat{\mathbf{B}} \triangleq ((\mathbb{1}\mathbb{1}^\top)_{N}  \otimes \Gamma_c)\circ \mathbf{B},\\
 &{\bm{p}}_n^{cst}\triangleq\frac{1}{2} D\left(\Gamma_s    \mathbf{B}^{s}_{mn}  \right) , 
& {\bm{q}}_n^{cst} \triangleq -\frac{1}{2}D(  \hat{\mathbf{B}}^{s}_{mn}) 
\end{eqnarray}
where $(\mathbb{1}\mathbb{1}^\top)_{N}$ is the all-ones matrix with dimension $N$ and $\otimes$ is Kronecker product.
Let ${\bm{p}}$, ${\bm{q}}$, ${\bm{p}^{cst}}$, ${\bm{q}^{cst}}$, $\abs{\bm{v}}$ and $\bm{\varphi}$  be the vectors stacking all the sub-vector corresponding to the multi-phase grid buses.
The following approximations holds:
\begin{equation}\label{GSOGS}
\begin{bmatrix}
{\bm{p}}  \\
{\bm{q}} 
\end{bmatrix}
-
\begin{bmatrix}
{\bm{p}}^{cst}  \\
{\bm{q}}^{cst} 
\end{bmatrix}
=
\overbrace{\begin{bmatrix}
\mathbf{\hat{B}}  & \bm 0  \\
\bm 0 & \mathbf{\hat{B}} 
\end{bmatrix}}^{\mathbf{S}:~\text{GSO}}
\overbrace{\begin{bmatrix}
\bm{\varphi} \\
\abs{\bm{v}} 
\end{bmatrix}}^{\bm x:~\text{GS}}=\mathbf{S}\bm x
\end{equation}
where $\bm{x}$ and $\bm S$ are graph filters in our problem.

\end{proposition}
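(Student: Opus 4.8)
The plan is to substitute the two approximations of Lemma~1 into the approximate injection equation \eqref{sinj} and then split it into its real and imaginary components. Writing the bracketed matrix in \eqref{sinj} as $A+\mathfrak{j}B$ with $A,B$ real and using $-\mathfrak{j}(A+\mathfrak{j}B)=B-\mathfrak{j}A$, the diagonal-extraction operator $D(\cdot)$ (which commutes with taking real and imaginary parts) yields $\bm{p}_n=\mathrm{Re}(\bm{s}_n)=\sum_{m\in\mathcal{N}_n}D(\mathrm{Im}(\cdot))$ and $\bm{q}_n=\mathrm{Im}(\bm{s}_n)=-\sum_{m\in\mathcal{N}_n}D(\mathrm{Re}(\cdot))$. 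Thus the active injection keeps the \emph{imaginary} parts of the quadratic forms $\bm{v}_n\bm{v}_m^H$, $\bm{v}_n\bm{v}_n^H$, and the reactive injection keeps the negatives of their \emph{real} parts; this is already the seed of the decoupling that produces the block-diagonal GSO.

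For the active branch I would insert the dominant term \eqref{approx1}, so that the imaginary part contributes the phase-linear piece $\Phi_{nm}\circ\Gamma_c$, with $\Phi_{nm}=\bm{\varphi}_n\mathbb{1}^\top-\mathbb{1}\bm{\varphi}_m^\top$, together with a constant piece $\Gamma_s$ obtained by freezing $\abs{\bm{v}}\to\mathbb{1}$. Using the distribution-line identity $\mathbf{B}^{(m)}_{mn}=-\mathbf{B}^{(n)}_{mn}$ assumed before \eqref{sinj}, the diagonal block $\Phi_{nn}(\tfrac12\mathbf{B}^s_{mn}+\mathbf{B}^{(n)}_{mn})$ and the coupling block $\Phi_{nm}\mathbf{B}^{(m)}_{mn}$ combine through the Laplacian structure of $\mathbf{B}$ into phase \emph{differences} $\bm{\varphi}_n-\bm{\varphi}_m$. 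After $D(\cdot)$ and the neighbor sum, the shunt part $\tfrac12 D(\Gamma_s\mathbf{B}^s_{mn})$ is precisely $\bm{p}^{cst}_n$, while the remainder stacks across all buses and phases into $\hat{\mathbf{B}}\bm{\varphi}$, each line contributing the block $\Gamma_c\circ\mathbf{B}_{mn}$ in position $(n,m)$, which is exactly the off-diagonal block of $\hat{\mathbf{B}}=((\mathbb{1}\mathbb{1}^\top)_N\otimes\Gamma_c)\circ\mathbf{B}$.

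For the reactive branch I would insert \eqref{approx2} and take the real part $\text{diag}(\abs{\bm{v}_n})\Gamma_c\,\text{diag}(\abs{\bm{v}_m})$. This term is \emph{bilinear} in the magnitudes whereas \eqref{GSOGS} is linear in $\abs{\bm{v}}$, so the key move is the flat-voltage linearization $\abs{v_{n_k}}\abs{v_{m_\ell}}\approx\abs{v_{n_k}}+\abs{v_{m_\ell}}-1$. Then, exploiting the symmetry of both $\Gamma_c$ and $\mathbf{B}_{mn}$, the diagonal extraction collapses the matrix product into a Hadamard operator: $D\!\big(\text{diag}(\abs{\bm{v}_n})\Gamma_c\,\text{diag}(\abs{\bm{v}_m})\,\mathbf{B}_{mn}\big)$ produces the neighbor-coupling term $(\Gamma_c\circ\mathbf{B}_{mn})\abs{\bm{v}_m}$, and the residual $-1$ and shunt contributions collect into $\bm{q}^{cst}_n=-\tfrac12 D(\hat{\mathbf{B}}^s_{mn})$ with $\hat{\mathbf{B}}^s_{mn}=\Gamma_c\circ\mathbf{B}^s_{mn}$. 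The linear remainder again assembles into the \emph{same} matrix $\hat{\mathbf{B}}$, now acting on $\abs{\bm{v}}$.

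Finally I would stack the per-bus identities over all multi-phase buses: since $\bm{p}$ has become a function of $\bm{\varphi}$ only and $\bm{q}$ a function of $\abs{\bm{v}}$ only, the cross blocks vanish and both diagonal blocks equal $\hat{\mathbf{B}}$, which is \eqref{GSOGS}. I expect the active branch to be routine; the real work lies in the reactive branch, namely (i) justifying and carrying the bilinear-to-linear magnitude linearization, and (ii) the index bookkeeping --- crucially the use of the symmetry of $\Gamma_c$ and $\mathbf{B}$ --- needed to show that the per-line blocks $\Gamma_c\circ\mathbf{B}_{mn}$, once diagonal-extracted and summed over neighbors, reproduce \emph{exactly} the compact Kronecker--Hadamard operator $\hat{\mathbf{B}}=((\mathbb{1}\mathbb{1}^\top)_N\otimes\Gamma_c)\circ\mathbf{B}$ and not merely a matrix with the same sparsity pattern.
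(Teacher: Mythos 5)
Your proposal is correct and follows essentially the same route as the paper's appendix proof: substitute the approximations of Lemma~1 into \eqref{sinj}, split the real and imaginary parts to decouple the active and reactive branches, peel off the shunt-induced constants ${\bm{p}}_n^{cst}$ and ${\bm{q}}_n^{cst}$, and use the diagonal-extraction/Hadamard identities together with the symmetry of $\Gamma_c$, $\mathbf{B}^{s}_{mn}$, $\mathbf{B}^{(n)}_{mn}$ and the assumption $\mathbf{B}^{(m)}_{mn}=-\mathbf{B}^{(n)}_{mn}$ to assemble the per-line blocks $\Gamma_c\circ\mathbf{B}_{mn}$ into $\hat{\mathbf{B}}$ acting on $\bm{\varphi}$ and $\abs{\bm{v}}$, respectively. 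The only (immaterial) difference is in the reactive branch, where you linearize the bilinear magnitude term via the symmetric expansion $\abs{v_{n_k}}\abs{v_{m_\ell}}\approx\abs{v_{n_k}}+\abs{v_{m_\ell}}-1$, whereas the paper uses the one-sided relaxation $\abs{\bm{v}_n}\abs{\bm{v}_n}^\top\approx\mathbb{1}\abs{\bm{v}_n}^\top$ followed by adding and subtracting $\abs{\bm{v}_n}\mathbb{1}^\top$; both are first-order flat-voltage approximations that yield the same Laplacian structure and the same constant terms.
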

\begin{proof} See the proof in the appendix.
\end{proof}

 Note that the  GSO in \eqref{GSOGS} is a valid Laplacian matrix. We emphasize that this linearization is different from the existing DC linearization \cite{liu2018data} for the three-phase unbalanced power flow equations and that the Laplacian matrix $\mathbf{\hat{B}}$ is a modified  susceptance matrix ${\mathbf{B}}$.

\vspace{-0.3cm}
\subsection{GCN and GRN}
\vspace{-0.1cm}
Power systems are dynamic systems with time-varying voltage phasors. 
In order to fuse features from both spatial and temporal domains, we introduce GCN  and GRN.

\subsubsection{GCN}
Based on \eqref{stkernal}, we can design the following transfer functions:
\begin{equation}
\begin{split}\label{stkernal2}
\tilde{\mathbb{H}}(\lambda, z) =\sum_{k=0}^K  {\Theta}_{k} \lambda^k \left(  \sum_{t=0}^{T-1}   h_{k, t}  z^{-t} \right),
\end{split}
\end{equation}
where $\left(  \sum_{t=0}^{T-1}   h_{k, t}  z^{-t} \right)$ are the parameters of  temporal convolution, and then the GCN blocks are utilized. Accordingly, the graph signal $\bm w^c_t$ from the first feature extraction layer   is:
\begin{equation}
\vspace{-0.1cm}
\begin{split}\label{graph_signal}
{\bm w^c_t} = \operatorname{ReLU} \left[\sum_{k=0}^K  \Theta_{k, 1} \mathbf{S}^k \left(  \sum_{\tau=0}^{T-1}   h_{k, t}  {\bm x_{t-\tau} } \right)\right].
\end{split}
\end{equation}
Followed by  the feature extraction  layer \eqref{graph_signal},
the remaining hidden layers $\ell\in\{1, \cdots, L-1\}$    are analogous to those of a  fully connected neural network:
\begin{equation}
\vspace{-0.1cm}
\begin{split}\label{cgcn_policy}
&{\bm{w}}^c_{t,\ell+1}=  {\operatorname{ReLU}\left(\Theta_\ell  \cdot {{\bm{w}}^c_{t,\ell}}\right)}, \quad L-1\ge\ell \ge 1.
\end{split}
\vspace{-0.1cm}
\end{equation}
For the output layer $L$, the regression samples are:
\begin{equation}\label{cgcn_policy}
\vspace{-0.1cm}
\begin{split}
 ~\bm{y}_t = \text{tanh}\left(\Theta_{L}  \cdot 
{\bm{w}}^c_{t,L}
\right),
\end{split}
\vspace{-0.1cm}
\end{equation}
where $\bm{y}_t$ is the  regression targets  and $\Theta_{\ell}, \forall \ell = 1,\cdots, L $   is the   trainable matrix.
Finally, the multi-layer  GCN learning function is: 
\begin{equation}
\begin{split}
  \bm{y}_t = \Phi^c(\mathbf{X}_t, \mathbf{S}, \theta ),\label{predictlabel}
\end{split}
\end{equation}
where $\theta  \triangleq \{(\Theta_{\ell}, \Theta_{k}, h_{k, t})|\forall \ell, \forall k\}$   represent the trainable parameters and $\mathbf{X}_t = [\bm{x}_{t-T+1}, \cdots, \bm{x}_{t}]$. Here, we have omitted the bias term   to unburden the notation, but they are present in the trainable model we use. 

\subsubsection{GRN}
RNNs are systems that exploit recurrence to learn dependencies in sequences of variable length.
Next, we adapt the operations performed by RNNs to take the graph structure into account when dealing with graph processes as follows:
\begin{equation}
\begin{split}\label{graph_signal_rnn}
{\bm w^r_t} = \operatorname{ReLU}\left(\Theta_{t} \left(\sum_{k=0}^K  {h}_{k} \mathbf{S}^k {\bm x_{t} }\right)+  \Theta_{t-1} {\bm w^r_{t-1} }\right),
\end{split}
\end{equation}
where we also omit  the biased term to unburden the notation. 
The remaining hidden and output layers layers of GRN are:
\begin{equation}
\begin{split}\label{rgcn_policy}
&{\bm{w}}^r_{t,\ell+1}=  {\operatorname{ReLU}\left(\Theta_\ell  \cdot {{\bm{w}}^r_{t,\ell}}\right)}, \quad L-1\ge\ell \ge 1,\\
& \bm{y}_t = \text{tanh}\left(\Theta_{L}  \cdot 
{\bm{w}}^r_{t,L}
\right),   
\end{split}
\end{equation}
where  $\bm{y}_t = \Phi^r(\mathbf{X}_t, \mathbf{S}, \theta)$ is defined as in the multi-layer GRN, $\theta  \triangleq \{(\Theta_{\ell}, \Theta_{t}, h_{k, t})|\forall \ell, \forall k\}$ represent the trainable parameters.  
%
Similar to our description of GCN, we illustrate the proposed GRN with $K=3$ on the middle side of Fig. \ref{fig:CGCN_RGCN}.  
It shows that the graph signals are processed through GCN to capture spatial features of the signal, and then processed by an RNN to capture the temporal correlation.  Both GCN and GRN architectures capture the spatio temporal correlation of voltage phasors, but they have unique advantages. In \eqref{cgcn_policy},  the proposed GCN model  handles the time convolutions  via a CNN that allows to use GPUs to accelerate the computations during training.
In contrast, the GRN has long   memory built in due to the
feedback connections with ${\bm w^r_t}$ and ${\bm w^r_{t-1}}$ in \eqref{graph_signal_rnn}. Hence, GRN is best suited for environments driven by state equations.

\vspace{-0.3cm}
\subsection{GSO for Partial Observation}\label{sec:reducedGSO}
\vspace{-0.1cm}
It is very useful for GCN architectures to be able to accept an input that does not include the complete information about the state, because of lacking measurements or for better scalability. 
In this subsection, we provide the correct GSO for a down-sampled graph signal as an input of a reduced order GCN.
Let $\bm v_{\mathcal{M}}$ (time index $t$ is ignored for simplicity) be the down-sampled voltage graph signal where $\mathcal{M} \in \mathcal{N}$ in the set of node indices of the corresponding buses. 
%
We leverage the result is \cite{ramakrishna2021gridgraph} summarized in the following lemma:
\begin{lemma}[{{\cite[Lemma 1]{ramakrishna2021gridgraph}}}]\label{lemma1} To define the GSO with respect to the reduced-graph of $\mathcal{M}$, denoted by $\mathbf{S}_{red, \mathcal{M}}$
let us partition the grid GSO  $\mathbf{S}$ as follows: 
\begin{equation}
\begin{split}\label{GSO_block}
\bf{S} = \begin{bmatrix}
\bf{S}_{\mathcal{M}, \mathcal{M}} & \bf{S}_{\mathcal{M}, \mathcal{M}^c} \\
\mathbf{S}^\top_{\mathcal{M}, \mathcal{M}^c} & \bf{S}_{\mathcal{M}^c, \mathcal{M}^c}
\end{bmatrix}.
\end{split}
\end{equation}
From Ohm's law shows it follows that the samples of the state $\bm v_{\mathcal{M}}$ are such that:
\vspace{-0.2cm}
\begin{equation}
\begin{split}\label{reduced_GSO}
\bm{v}_{\mathcal{M}} = \mathcal{H}(\mathbf{S}_{red, \mathcal{M}}) \bm{\kappa},
\end{split}
\end{equation}
where $\mathbf{S}_{red, \mathcal{M}}$ is the Schur complement of the block  $\mathbf{S}_{\mathcal{M}^c \mathcal{M}^c}$ (which is the Kron-reduction of $\bf{S}$), i.e.,
\begin{equation}
\begin{split}\label{GSO_reduced}
\mathbf{S}_{red, \mathcal{M}}  = \mathbf{S}_{\mathcal{MM}} - \mathbf{S}_{\mathcal{MM}^c} \mathbf{S}^{-1}_{\mathcal{M}^c \mathcal{M}^c} \mathbf{S}^\top_{\mathcal{MM}^c}.\notag
\end{split}
\end{equation}
\end{lemma}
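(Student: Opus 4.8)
The plan is to obtain the reduced relation by block Gaussian elimination of the unobserved coordinates directly from Ohm's law. Recalling from \eqref{eq:node_injection} that Ohm's law reads $\bm{i} = \mathbf{S}\bm{v}$ with the admittance-type GSO $\mathbf{S}$, so that the state is recovered through the integrator filter $\mathcal{H}(\mathbf{S}) = \mathbf{S}^{-1}$ via $\bm{v} = \mathbf{S}^{-1}\bm{i}$, I would first split the nodes into the observed set $\mathcal{M}$ and its complement $\mathcal{M}^c$ and write the system in the partitioned form of \eqref{GSO_block},
\[
\begin{bmatrix} \bm{i}_{\mathcal{M}} \\ \bm{i}_{\mathcal{M}^c} \end{bmatrix}
= \begin{bmatrix} \mathbf{S}_{\mathcal{M}\mathcal{M}} & \mathbf{S}_{\mathcal{M}\mathcal{M}^c} \\ \mathbf{S}^{\top}_{\mathcal{M}\mathcal{M}^c} & \mathbf{S}_{\mathcal{M}^c\mathcal{M}^c} \end{bmatrix}
\begin{bmatrix} \bm{v}_{\mathcal{M}} \\ \bm{v}_{\mathcal{M}^c} \end{bmatrix},
\]
so that the quantity of interest $\bm{v}_{\mathcal{M}}$ is coupled to the hidden state $\bm{v}_{\mathcal{M}^c}$ that we wish to remove.

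Second, I would use the lower block row to express the hidden state as $\bm{v}_{\mathcal{M}^c} = \mathbf{S}^{-1}_{\mathcal{M}^c\mathcal{M}^c}\big(\bm{i}_{\mathcal{M}^c} - \mathbf{S}^{\top}_{\mathcal{M}\mathcal{M}^c}\bm{v}_{\mathcal{M}}\big)$, which requires $\mathbf{S}_{\mathcal{M}^c\mathcal{M}^c}$ to be nonsingular. Substituting this back into the upper block row cancels the coupling term and leaves
\[
\big(\mathbf{S}_{\mathcal{M}\mathcal{M}} - \mathbf{S}_{\mathcal{M}\mathcal{M}^c}\mathbf{S}^{-1}_{\mathcal{M}^c\mathcal{M}^c}\mathbf{S}^{\top}_{\mathcal{M}\mathcal{M}^c}\big)\,\bm{v}_{\mathcal{M}}
= \bm{i}_{\mathcal{M}} - \mathbf{S}_{\mathcal{M}\mathcal{M}^c}\mathbf{S}^{-1}_{\mathcal{M}^c\mathcal{M}^c}\bm{i}_{\mathcal{M}^c},
\]
in which the operator multiplying $\bm{v}_{\mathcal{M}}$ is exactly the Schur complement $\mathbf{S}_{red,\mathcal{M}}$ of \eqref{GSO_block}, and the right-hand side identifies the effective excitation $\bm{\kappa} \triangleq \bm{i}_{\mathcal{M}} - \mathbf{S}_{\mathcal{M}\mathcal{M}^c}\mathbf{S}^{-1}_{\mathcal{M}^c\mathcal{M}^c}\bm{i}_{\mathcal{M}^c}$. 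Inverting the Schur complement then yields $\bm{v}_{\mathcal{M}} = \mathbf{S}^{-1}_{red,\mathcal{M}}\bm{\kappa} = \mathcal{H}(\mathbf{S}_{red,\mathcal{M}})\bm{\kappa}$, which is \eqref{reduced_GSO}. As an independent cross-check I would invoke the block-inverse identity that the $\mathcal{M}\mathcal{M}$ block of $\mathbf{S}^{-1}$ equals $\mathbf{S}^{-1}_{red,\mathcal{M}}$, so that restricting $\bm{v} = \mathbf{S}^{-1}\bm{i}$ to the observed coordinates reproduces the same reduced filtering and confirms that $\bm{\kappa}$, rather than the raw $\bm{i}_{\mathcal{M}}$, is the correct input signal.

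The elimination itself is routine; the substantive obstacles are two structural facts. First, I must guarantee that $\mathbf{S}_{\mathcal{M}^c\mathcal{M}^c}$ is invertible so the Schur complement is well defined, which I would establish by appealing to the Laplacian structure of $\mathbf{S}$ from \eqref{GSOGS}: a proper principal submatrix of a connected weighted Laplacian is strictly diagonally dominant on its boundary rows and hence nonsingular. Second, and more delicate, I must argue that the contracted description is still a \emph{bona fide} graph filter on the Kron-reduced graph, i.e.\ that $\mathbf{S}_{red,\mathcal{M}}$ inherits the Laplacian sign pattern and row-sum structure so that $\mathcal{H}(\mathbf{S}_{red,\mathcal{M}})$ is shift-invariant with respect to the reduced topology. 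This amounts to verifying that Kron reduction preserves the Laplacian property, and it is precisely this step that justifies interpreting $\mathbf{S}_{red,\mathcal{M}}$ as the legitimate GSO of the down-sampled grid rather than as a mere algebraic Schur complement.
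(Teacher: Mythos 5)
Your proposal is correct and takes essentially the same route as the paper: both eliminate the unobserved block of Ohm's law $\bm{i}=\mathbf{S}\bm{v}$ via block (Kron) elimination, arriving at $\bm{v}_{\mathcal{M}}=\mathbf{S}^{-1}_{red,\mathcal{M}}\bm{\kappa}$ with $\bm{\kappa}=\bm{i}_{\mathcal{M}}-\mathbf{S}_{\mathcal{M}\mathcal{M}^c}\mathbf{S}^{-1}_{\mathcal{M}^c\mathcal{M}^c}\bm{i}_{\mathcal{M}^c}$, which is precisely the compact identity $\bm{v}_{\mathcal{M}}=\mathbf{S}^{-1}_{red,\mathcal{M}}\left[\mathbb{I}_{\lvert\mathcal{M}\rvert}~~ -\mathbf{S}_{\mathcal{M}\mathcal{M}^c}\mathbf{S}^{-1}_{\mathcal{M}^c\mathcal{M}^c}\right]\bm{i}$ that constitutes the paper's one-line proof. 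Your additional points (nonsingularity of $\mathbf{S}_{\mathcal{M}^c\mathcal{M}^c}$ and preservation of the Laplacian structure under Kron reduction) are left implicit in the paper, so they refine rather than depart from its argument.
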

\begin{proof}
The statement follows from the observation that \eqref{reduced_GSO} holds since from Ohm's law:
\begin{equation}
\begin{split}
\bm{v}_{\mathcal{M}}= \overbrace{\mathbf{S}^{-1}_{red, \mathcal{M}}}^
{\mathcal{H}(\mathbf{S}_{red, \mathcal{M}})}
\overbrace{[\mathbb{I}_{\abs{\mathcal{M}}} - \mathbf{S}_{\mathcal{M} \mathcal{M}^c} \mathbf{S}^{-1}_{\mathcal{M}^c \mathcal{M}^c}](\mathbf{S} (\mathbf{S}^{-1}) \bm{i})}^{\bm \kappa},
\end{split}
\end{equation}
where $\mathbb{I}$ is an identity matrix. The equation establishes a generative graph filter model, with GSO $\mathbf{S}_{red, \mathcal{M}}$ for the decimated voltage phasors, supporting such GSO choice.  
\vspace{-0.2cm}
\end{proof}

\vspace{-0.2cm}
\section{GCN and GRN Applications}
Through two applications, in this section we illustrate  how our framework can be used to advance AI for grid data. 

\vspace{-0.2cm}
\subsection{Power System State Estimation and Forecasting} 
An important contribution of our design is its capability to take inputs that contain only a subset ${\cal M}$ of state variables. It is natural to expect that the performance of the PSSE and PSSF is affected by the subset $\mathcal{M}$ where PMUs are installed. 
 This is why  we provide an optimized criterion to select ${\cal M}$ leveraging GSP sampling theory. 
Let the GFT basis corresponding to the first dominant $k$ graph frequencies be $\mathbf{U}_{\mathcal{K}}$. As shown in \cite{ramakrishna2021gridgraph} the best ${\cal M}$ is one-to-one with the subset of rows of $\mathbf{U}_{\mathcal{K}}$ with minimum correlation.
Let ${\mathcal{F}}_\mathcal{M}$ be what is called the {\it vertex limiting operator} i.e. the matrix such that ${\mathcal{F}}_\mathcal{M} = \bm{\mathcal{Q}}_\mathcal{M} \bm{\mathcal{Q}}^\top_\mathcal{M}$, where $\bm{\mathcal{Q}}_\mathcal{M}$ has columns that are the coordinate vectors pointing to each vertex/node in ${\cal M}$. 
Mathematically, the optimal placement can be sought by maximizing the smallest  singular value, $\max_{{\mathcal{F}}_{\mathcal{M}}}\varpi_{\min} ({\mathcal{F}}_{\mathcal{M}} \mathbf{U}_{\mathcal{K}})$, of the matrix ${\mathcal{F}}_{\mathcal{M}} \mathbf{U}_{\mathcal{K}}$. Such choice amounts to the selection of rows of $\mathbf{U}_{\mathcal{K}}$ that are as uncorrelated as possible, because the resulting matrix ${\mathcal{F}}_{\mathcal{M}} \mathbf{U}_{\mathcal{K}}$ has the highest conditional number \cite{anis2016efficient}.

After choosing the best location of PMUs $\mathcal{M}$ by the aforementioned method, we have  the sub-sampled measurement $[\bm{z}_t]_{\mathcal{M}}$.  
%
Let $\mathcal{M}$ denote the set of available measurement buses  and $\mathcal{U}$ denote the set of  unavailable ones. Therefore, \eqref{eq:node_injection} can be written as:
\begin{equation}
\underbrace{\left[\begin{array}{l}
\hat{\boldsymbol{i}}_{\mathcal{M}} \\
\hat{\boldsymbol{v}}_{\mathcal{M}}
\end{array}\right]}_{\boldsymbol{z}_{t}}=\underbrace{\left[\begin{array}{cc}
\boldsymbol{Y}_{\mathcal{M M}} & \boldsymbol{Y}_{\mathcal{M} \mathcal{U}} \\
\mathbb{I}_{|\mathcal{M}|} & \mathbf{0}
\end{array}\right]}_{\boldsymbol{H}} \underbrace{\left[\begin{array}{l}
\boldsymbol{v}_{\mathcal{M}} \\
\boldsymbol{v}_{\mathcal{U}}
\end{array}\right]}_{\boldsymbol{x}_t}+\boldsymbol{\varepsilon}_{t},
\end{equation}
where $\boldsymbol{\varepsilon}_{t}$ is a vector of measurement noise.
Our task is to  estimate  the  voltage phasors at the present time and forecast the future voltage phasors by the GCN and GRN methods. The time-series  voltage phasor forecasting problem is  modeled as predicting the most likely voltage phasors in the next $H$ time steps given the  previous  $T$ sub-sampled observation $[\bm{x}_t]_{\mathcal{M}}$ as
\begin{equation}
\begin{aligned}
{\bm x}'_{t+H}=  \underset{ \bm{x}_{t+H}}{\arg \max } \log P\left( \bm{x}_{t+H} \mid \bm{z}_{t-T+1}, \ldots, \bm{z}_{t}\right),\notag
\end{aligned}
\vspace{-0.1cm}
\end{equation} 	
where $[\bm{z}_{t}] \in\mathbb{R}^{2\abs{\mathcal{M}}}$ is an observation vector of $ \abs{\mathcal{M}}$  measurements for both voltage and current phasors at time step $t$, each element of which records the historical observation for a bus.

\subsubsection{Methodology}
First of all, we need to recover the voltage phasors $\bm{x}_t$ based on $\mathcal{M}$ measurements, i.e. $\bm{z}_t = \left[\hat{\boldsymbol{i}}_{\mathcal{M}},\hat{\boldsymbol{v}}_{\mathcal{M}} \right]^\top$ by solving the regularized least square problem:
\begin{equation}
   \min_{\bm{x}_t} \norm{\bm{z}_t  - \bm{H}\bm{x}_t}^2_2+ \mu_1 (\bm{x}_t^H 
\mathbf{S}
\bm{x}_t)\label{RLS}
\end{equation}
where $\mu_1$ is positive. The closed-form solution of \eqref{RLS} is:
\begin{equation}
  \hat{\bm{x}}_t = \left(\bm{H}^H\bm{H}  + \mu_1
\mathbf{S}
\right)^\dagger\bm{H}^H\bm{z}_t, \label{recover_vp}
\end{equation}
where $\hat{\bm{x}}_t$ is the estimated voltage phasor and where  $(\cdot)^\dagger$ denotes the pseudo-inverse.
The complete algorithm is below: 
  \begin{algorithm} [!htb]\small
      \caption{Voltage Phasor    Forecasting}
  We collect $T$ historical sub-sampled measurements $ \bm{z}_{t-T+1} , \ldots,  \bm{z}_{t}  $\;
    We utilize \eqref{recover_vp} to obtain the estimated full observations $\hat{\bf{X}} = [\hat{\bm{x}}_{t-T+1}, \ldots, \hat{\bm{x}}_{t}]$\;
 The loss function of the  GCN or GRN function for voltage phasor prediction is written as
    \begin{align}
     &\mathcal{L}(\Phi, \theta)  = \sum_t \Big\{\norm{\bm{y}_t - \bm{x}_{t+H} }^2 +\label{objfore}\\
     & \mu_2 \norm{ \hat{\boldsymbol{v}}_{t+H,\mathcal{M}}  \circ \hat{\boldsymbol{i}}_{t+H, \mathcal{M}}^*  - \big[\bm{y}_t \circ  (\mathbf{S} \bm{y}_t)^*\big]_{\mathcal{M}}}^2 \Big\},\notag
    \end{align}
    where $\bm{x}_{t+H}$ is the ground truth voltage phasor in the next $H$ time step, $(\cdot)^*$ denotes the conjugate operator and $\bm{y}_t = \Phi(\hat{\mathbf{X}}_t, \mathbf{S}, \theta ) $  is the predicted target to approximate the ground-truth regression target $\Re(\bm{x}_{t+H})$ and $\Im(\bm{x}_{t+H})$, where $\Phi$ could be either $\Phi^c$ in \eqref{predictlabel} or $\Phi^r$ in \eqref{rgcn_policy}. Note that $H = 0$ is the voltage phasor estimation, and $H \ge 1$ is the voltage phasor forecasting\;
\end{algorithm}

Note that the regularization term in \eqref{objfore} favors voltage phasor forecasts that minimize the sum of the absolute value of the apparent power injections.
After   training,  we use $\Phi(\hat{\mathbf{X}}_t, \mathbf{S}, \theta)$ to   forecast  $\bm{x}_{t+H}$ given the observations $ \bm{z}_{t-T+1} , \ldots,  \bm{z}_{t} $.

\vspace{-0.2cm}
\subsection{Deep Reinforcement Learning Control of Smart Inverters}
Following \cite{zhang2020deep, jha2019bi} we consider reactive power support for smart inverters operating on a per-phase basis, where each smart inverter is installed at bus $n_\phi$.
The reactive power support of the smart inverter depends on the nominal per-phase capacity, denoted by $s_{n_\phi}$. Specifically, the range of possible reactive power, $q_{n_\phi}$ for the smart inverter is:
\begin{equation}
\begin{split}\small
|{q}_{n_\phi}|\leq \bar{q}_{n_\phi}\triangleq
\sqrt{s_{n_\phi}^2 -   p_{n_\phi}^2}
\end{split}
\end{equation}
where $\bar{q}_{n_\phi} =  \sqrt{s_{n_\phi}^2 -   p_{n_\phi}^2}$  denotes the maximum reactive power of this smart inverter installed at bus ${n_\phi}$. Here, we define the control variable as $a_{n_\phi}\in [-1, 1]$, and the reactive power injected into the distribution network is ${q}_{n_\phi} = a_{n_\phi}  \bar{q}_{n_\phi}$. 
\subsubsection{Methodology}
The DRL training is aimed at learning the parameters of GCN (or GRN) encoding the optimum stochastic policy function for mapping the voltage phasors measurements onto the control variables $a_{n_\phi}$ that regulate the voltage magnitude. 
The interaction between the agent and the grid environment at time $t$ is described by: 1) the state, comprising a set of past samples $ (\bm x_t,\ldots,\bm x_{t-T+1})$, 2) the action $\bm a_t$, and 3) the reward $\bm r_t$. 
\paragraph{State and Action}
The tuple of actions for the Volt-VAR control of the inverters in the bus set ${\cal N}_s$ is denoted by:
\begin{equation}\label{eqobj}
\begin{split}
&\bm{a} = [a_1,  \cdots,a_{n_\phi}, \cdots, a_{\abs{\mathcal{N}_s}}], a_{n_\phi}\in   [-1, 1],
\end{split}
\vspace{-0.2cm}
\end{equation}
 and thus the corresponding reactive power injections are:
\begin{equation}
\begin{split}
&\bm{q}_{n_\phi} =[q_1, \cdots,q_{n_\phi}, \cdots, q_{\abs{\mathcal{N}_s}}], \text{where}~{q}_{n_\phi} =  a_{n_\phi}  \bar{q}_{n_\phi},
\end{split}
\vspace{-0.2cm}
\end{equation}
where $\abs{\mathcal{N}_s}$ represents the number of smart inverters, $a_i$ denotes the control action on the $i$th smart inverter.  
 The vector $\bm{a}$, output of the GCN (or GRN) approximating the optimum policy, is a function of the  three-phase voltages at all, or part, of the buses of the distribution system, which constitute the  observation and are the input of the GCN-DRL or GRN-DRL. The state vector/observation is $\bm{x} =[ {\bm{\varphi}};
\abs{\bm{v}}]^\top$, where ${\bm{\varphi}}$ is the vector of re-centered voltage phases and $\abs{\bm{v}} $ is the vector of voltage magnitudes. 

\paragraph{Reward}
The regret is defined as the magnitude of voltage  deviation from the reference  at bus $n_\phi$ at time $t$ as follows \cite{cao2020multi, xu2019optimal}:
\begin{equation}
    \begin{split}
        r_{n_\phi, t} = -\left|\abs{{v}_{n_\phi, t}} -\bar{v}  \right|, n_{\phi}\in \mathcal{N}_s,
    \end{split}
\end{equation}
where $\bar{v}$ denotes the desired voltage magnitude (i.e., 1 p.u.), and $\abs{{v}_{n_\phi, t}}$ is the measured voltage magnitude on phase $n_\phi$. 

\paragraph{Objectives of DRL}
In this application, $\bm{a}_t = \Phi({\mathbf{X}}_t, \mathbf{S}, \theta ) $ denotes a stochastic policy  that models the probability distribution of $\bm{a}_t \in \mathcal{A}$ given a sequence of   observations  ${\mathbf{X}}_t$.
The goal of each agent is to find a policy, which maximizes its expected discounted return:
\begin{equation}\label{eqobj}
\begin{split}
\Phi({\mathbf{X}}_t, \mathbf{S}, \theta)  \in \argmax J(\pi)= \mathbb{E}_{\varsigma \sim \Phi}\left[\sum_{t=0}^{T} \gamma^\top r_t\right],
\end{split}
\end{equation}
where $\varsigma $ is the trajectory generated by policy $\Phi({\mathbf{X}}_t, \mathbf{S}, \theta)$, i.e., the action  $\bm{a}_t$ is taken according to policy  $\Phi({\mathbf{X}}_t, \mathbf{S}, \theta)$, $r_t$ represents rewards at time $t$. The parameter $\gamma \in  (0, 1)$ is the discounting factor, discounting future rewards.


\section{Experimental results}
In this section, we perform numerical experiments adopting the IEEE 118-bus transmission network and the 123-bus feeder distribution system to validate the proposed GCN and GRN  frameworks for power system state estimation and forecasting, and DRL-based voltage control. In both cases, for the formulation and training of the GCN and GRN architectures we relied on PyTorch.
\begin{figure}[!htb]
  \vspace{-0.3cm}
  \centering
  \subfigure[Voltage Magnitudes ($H=0$).]{
 \includegraphics[width=1.5in]{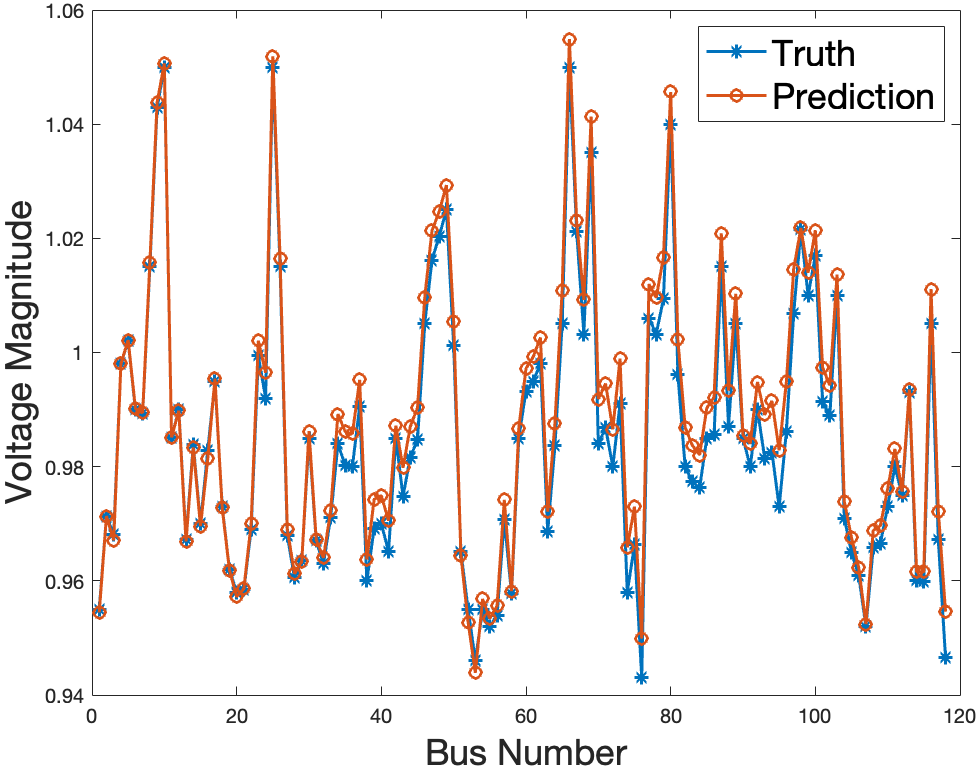}
      \label{Figure1a}
 } 
\hspace{-0.1in}
 \subfigure[Voltage Phases ($H=0$).]{
\includegraphics[width=1.5in]{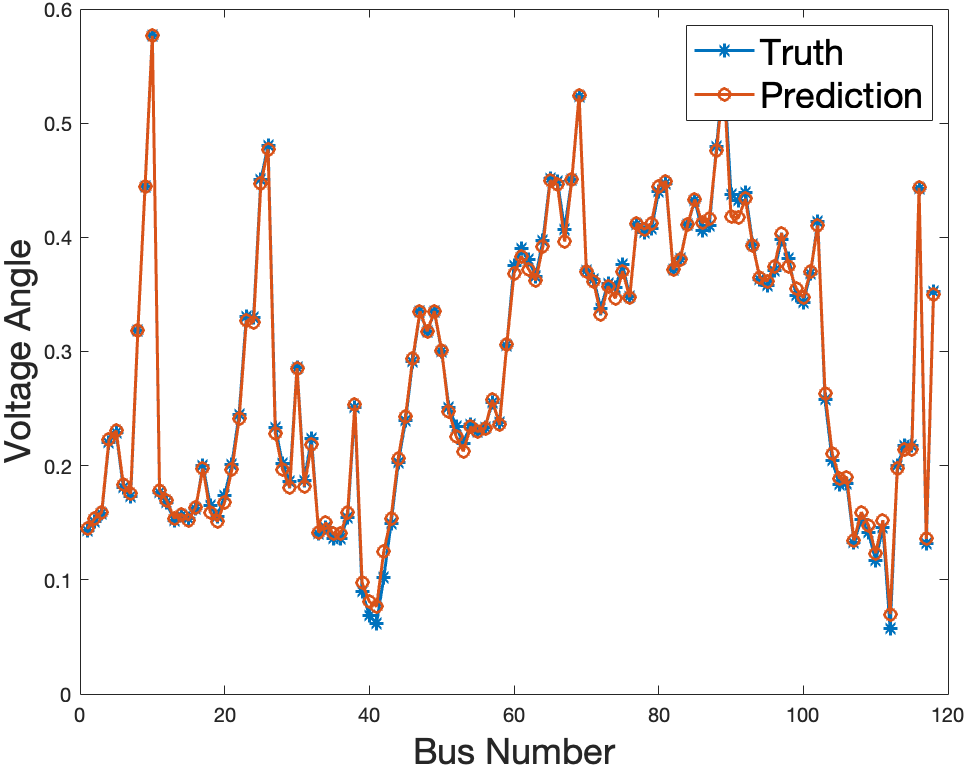}
     \label{Figure1a}
} 
  \subfigure[Voltage Magnitudes ($H=1$).]{
 \includegraphics[width=1.5in]{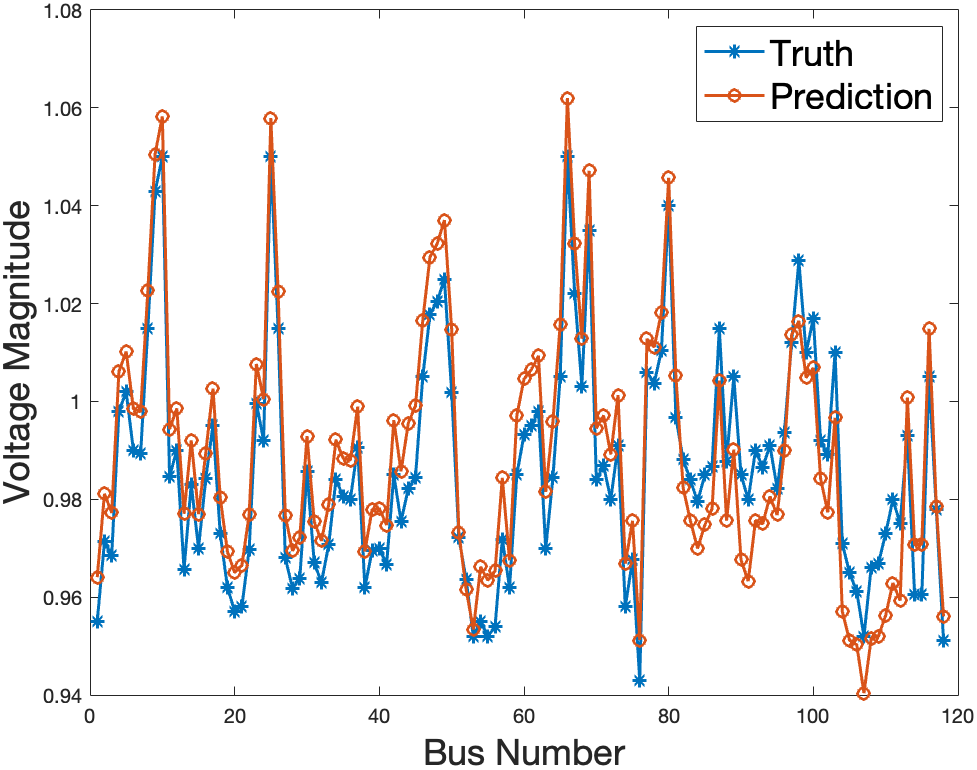}
      \label{Figure1a}
 } 
\hspace{-0.1in}
 \subfigure[Voltage Phases ($H=1$).]{
\includegraphics[width=1.5in]{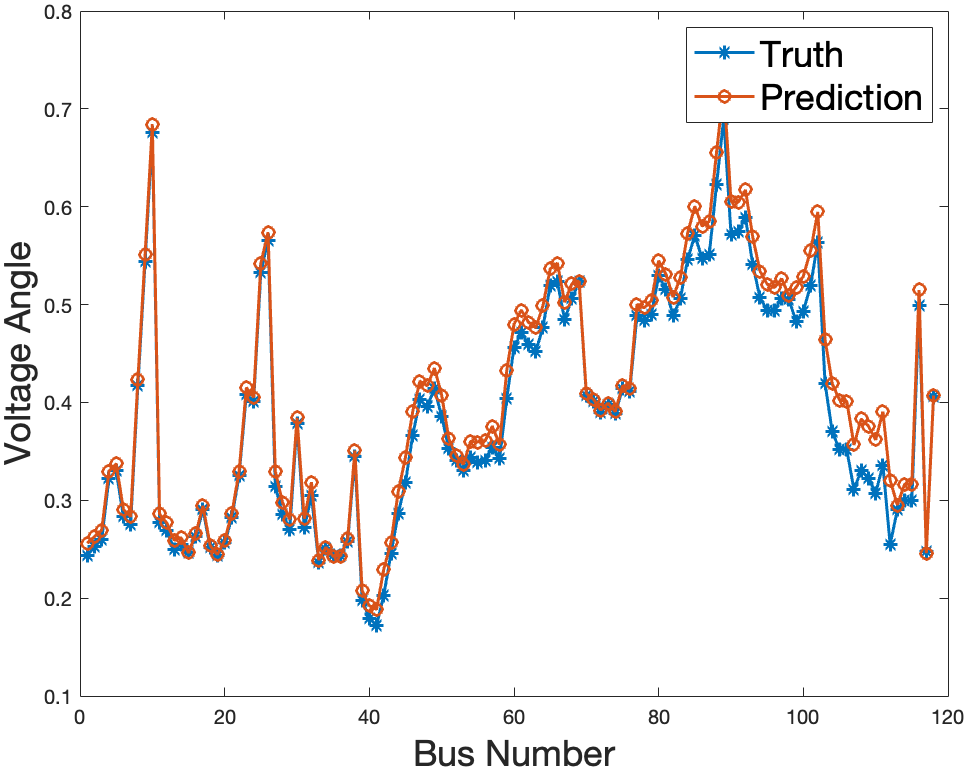}
     \label{Figure1a}
} 
  \caption{An example of PSSE and PSSF by GCN for the IEEE 118-bus system.}
  \vspace{-0.7cm}\label{estimation_forecast}
\end{figure}  
\vspace{-0.3cm}

\begin{figure*}[!htb]
	\centering
 \subfigure[Training curves of 3 small inverters based on full observations.]{
 \includegraphics[width=2.20in]{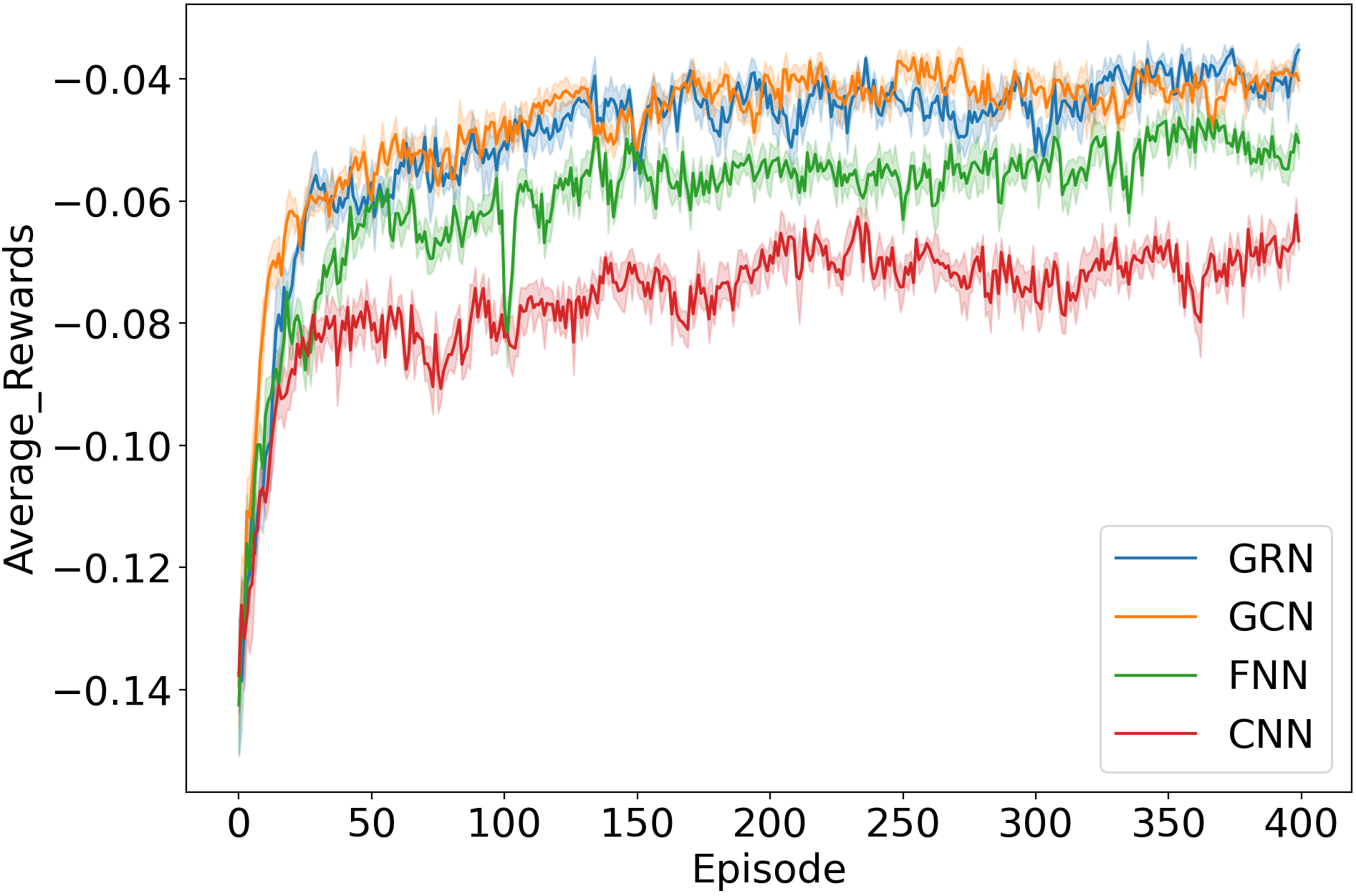}
      \label{Figure4a}
 }
 \subfigure[Training curves of 3 small inverters based on partial observations.]{
 \includegraphics[width=2.20in]{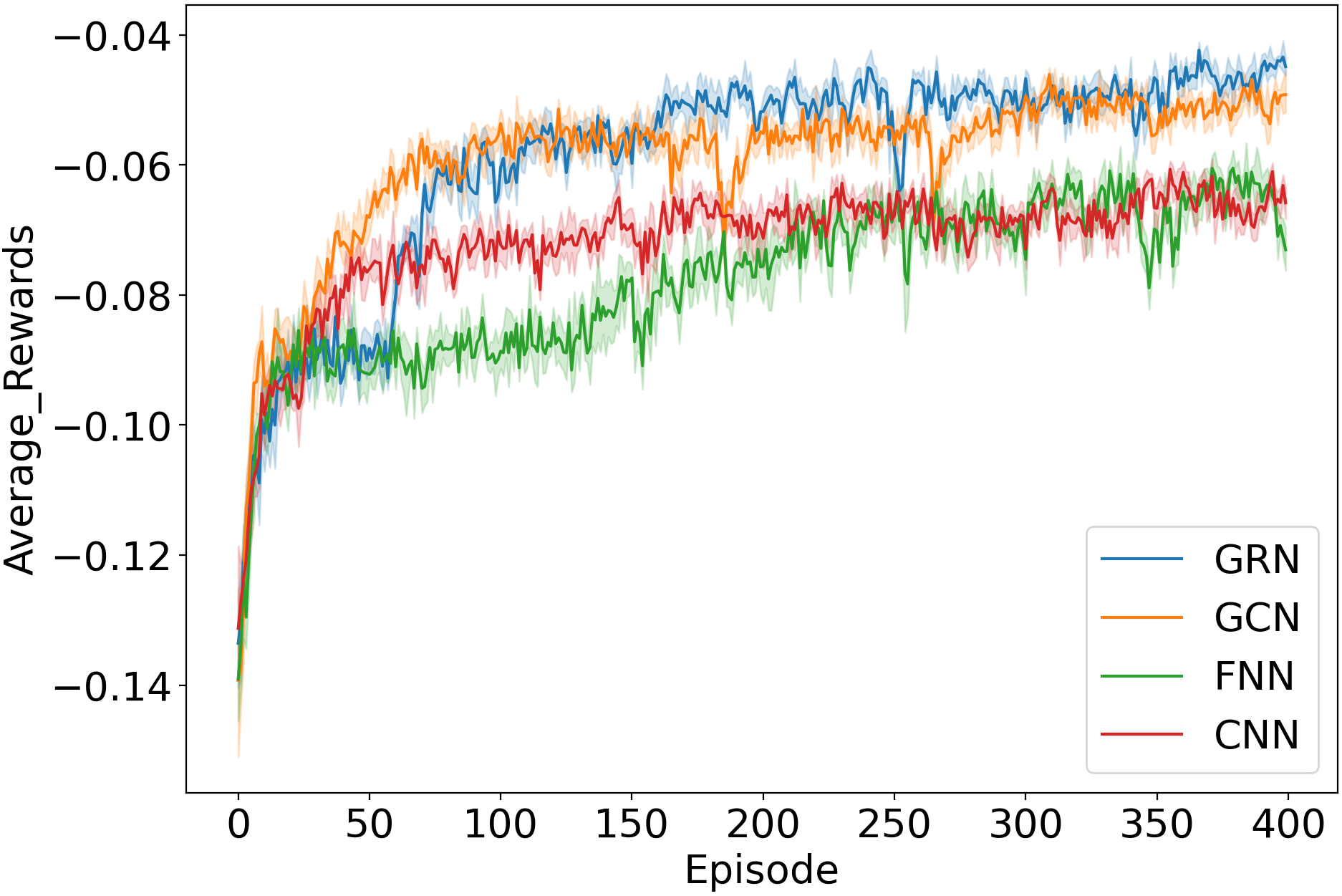}
   \label{Figure4b}
 }
 \subfigure[Training curves of 6 small inverters based on full observations.]{
\includegraphics[width=2.20in]{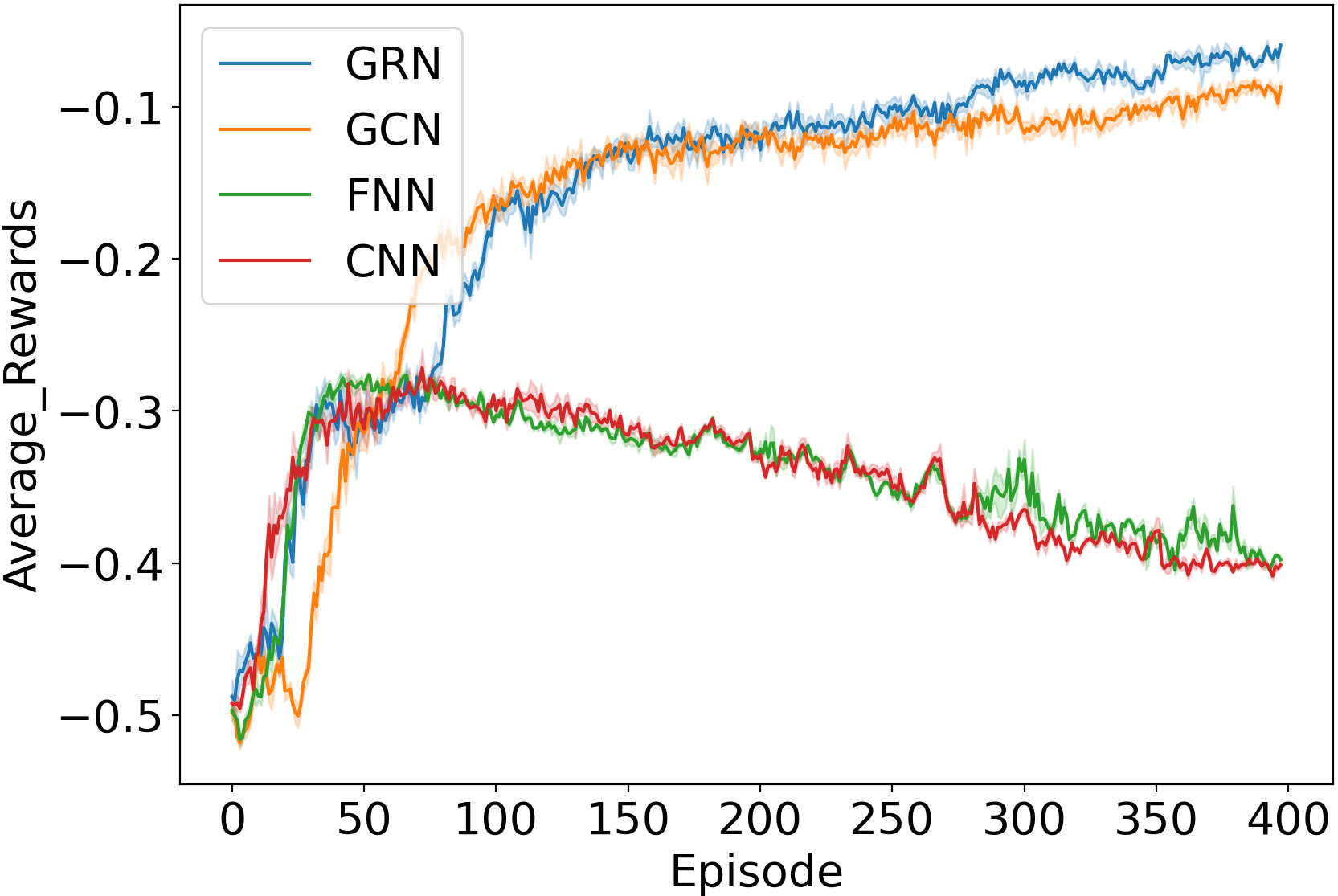}
     \label{Figure4c}
}
\vspace{-0.35cm}
	\caption{(a) and (b) have 3  smart inverters and (c) has  6 smart inverters. (a) and (b) illustrate the learning (training) curves of the GCN-DRL and GRN-DRL algorithms for voltage magnitude regulations with full and partial observations, respectively. }
	\label{Figure4}
	\vspace{-0.5cm}
\end{figure*} 
\subsection{Power System State Estimation and Forecasting}
\subsubsection{Experimental setup}
For the first application, we use  realistic load time-series from the Texas grid and use Matpower  to compute the optimal power flow solutions to obtain the voltage phasors for the IEEE 118-bus system. All the tests  $T=10$ hours as the historical time window, i.e. 10 observed data points, to forecast voltage phasors in the next  hours $H= 1, 2, 3, 4 , 5$ where $H=0$ is a PSSE problem that estimates the complete voltage phasors $\bm{x}_{t}$ given $\bm{z}_{t-T+1}, \ldots, \bm{z}_{t}$. We compare the proposed GCN and GRN with other NNs.  In particular, the benchmark algorithm  FNN  has 4 layers with 512 neurons each  layer.  Another benchmark algorithm CNN  has three hidden layers with the 32, 64 and 32 output channels and one fully connected NNs, respectively. The third benchmark GNN \cite{kipf2017semi, hossain2021state}   is a 1st-order approximation Chebyshev GCN \cite{defferrard2016convolutional} with the adjacency matrix as GSO. The fourth benchmark RNN utilizes the RNN as the feature extraction layer, and then FNNs as the hidden and output layers. 

\begin{table}[!htb]\scriptsize
\center
\vspace{-0.4cm}
\begin{threeparttable}
\caption{PMU Installed Buses in the Transmission Network}
\label{pmu_selec1}
\centering
\begin{tabular}{c |c  }
\toprule
Systems&  Bus Name   \\
\midrule
118-bus   & 14,	117, 72, 86,	43,	67,	99,	87,	16,	33,	112,	28,	98,	111,	53,	97, 1	 \\
with PMUs&42, 107,	48,	22,	46,	13,	24,	101,	44,	73,	109,	29,	20,	91, 26,	84,	10\\
&	52,	57,	76,	115,	39,	74,	104,	93,  79,	35,	6,	18,	88,	60,	116,	55,	58	\\
& 68,	64,	7,	50,	103,	75,	78,	83,	69.\\
\bottomrule
\end{tabular}
\end{threeparttable}
\vspace{-0.2cm}
\end{table}
With the predicted voltage phasors, we further utilize the power flow  solver, i.e. Matpower, to obtain the feasible power generations, and then calculate their corresponding fuel costs.
The evaluation metrics for comparison includes mean square error (MSE) between the predicted and ground-truth voltage phasors and Mean Absolute Percentage Error (MAPE) between the predicted and optimal fuel costs. As shown in Table \ref{pmu_selec1}, we choose the number of sensor placements $\abs{\mathcal{M}} = 60$ and place them so as to maximize $\max_{ {\mathcal{F}}_{\mathcal{M}}}\varpi_{\min} ({\mathcal{F}}_{\mathcal{M}} \mathbf{U}_{\mathcal{K}})$. Besides, through numerous simulations for the hyperparameter tuning, we choose $\mu_1$ = 1$e$-6 and $\mu_2$ = 1$e$-3 for all benchmarks.

\begin{table}[!htb]
\scriptsize
 \renewcommand\tabcolsep{2.0pt}
\center
\begin{threeparttable}
\caption{MSE: PSSF for Voltage Phasors in Transmission Networks}
\label{forecasting1}
\centering
\begin{tabular}{c |c  c  c  c  c  c}
\toprule
Future  (Hours)&  $H=0$ &  $H=1$ &  $H=2$ & $H=3$ &  $H=4$ &  $H=5$  \\
\midrule
FNN&        1.1217$e$-4   & 4.7697$e$-4  &  7.7697$e$-4 &   5.4458$e$-4 &   9.2263$e$-4  &  8.8066$e$-3  \\
CNN&        2.8069$e$-4  & 4.4070$e$-4  & 1.7169$e$-3  &    1.7238$e$-2 &   1.6815$e$-2  &  1.6758$e$-2   \\
RNN&        8.8034$e$-4  & 8.7639$e$-4  &   7.6329$e$-4 &   7.7802$e$-4 &   7.1659$e$-4  &  7.8581$e$-4  \\
1$^{st}$GNN\cite{kipf2017semi}&    7.8849$e$-4  & 7.2899$e$-4  &    7.5874$e$-4 &   8.6065$e$-4 &   7.5357$e$-4  &  8.1019$e$-4  \\
GCN &   \textbf{6.1381\textit{e}-5}  & \textbf{1.0080\textit{e}-4}  &   2.6714$e$-4 &   2.0157$e$-4 &   3.2469$e$-4  &  2.2308$e$-4  \\
GRN &   7.2153$e$-5  & 1.8058$e$-4  &   \textbf{2.4738\textit{e}-4} &   \textbf{1.3476\textit{e}-4} &   \textbf{2.9829\textit{e}-4}  &  \textbf{2.1137\textit{e}-4}  \\
\bottomrule
\end{tabular}
\end{threeparttable}
\end{table}
%
%

\begin{table}[!htb]
\scriptsize
 \renewcommand\tabcolsep{2.0pt}
\center
\begin{threeparttable}
\caption{MAPE: PSSF for Fuel Costs in Transmission Networks}
\vspace{-0.3cm}
\label{forecasting2}
\centering
\begin{tabular}{c |c  c  c  c  c  c}
\toprule
Future  (Hours)&  $H=0$ &  $H=1$ &  $H=2$ & $H=3$ &  $H=4$ &  $H=5$  \\
\midrule
FNN&    	2.0602\%  & 2.1338\%  & 6.5671\% &	4.5503\% &	6.9320\%  &	7.0781\%  \\
CNN&    	2.0683\%  & 9.1396\%  &	9.8635\% &	16.4026\% &	28.7120\%  &	58.2591\%  \\
RNN&    	3.0049\%  & 1.9634\% &	2.3701\% &	2.7542\% &	2.1048\%  &	2.4831\%  \\
1$^{st}$GNN\cite{kipf2017semi}&    3.7144\%  & 2.8667\%  &	2.2925\% &	2.3752\% &	2.3347\%  &	4.7779\%  \\
GCN &	\textbf{0.3838\%}   & \textbf{1.0809\%}  & \textbf{0.9991\%} &	\textbf{1.2734\%} &	2.0542\% &	2.5816\%  \\
GRN &	0.7065\%  &  0.7533\%  & 1.6357\% &	1.6365\% &	\textbf{1.5054\%}  & \textbf{1.8188\%}  \\
\bottomrule
\end{tabular}
\end{threeparttable}
\vspace{-0.5cm}
\end{table}

\subsubsection{PSSE and PSSF Results}
Tables \ref{forecasting1} and \ref{forecasting2} show the results of  GCN and GRN and various baselines on the IEEE 118-bus system   experiments described above. The results illustrate that both GCN and GRN achieve  the best performance. 
In particular,  $H = 0$ is the PSSE problem, and the MSE of \eqref{recover_vp} for estimation   is   2.3708$e$-4. While this is a respectable outcome, the supervised  GCN and GRN have much smaller error, i.e. 6.1381$e$-5 and 7.2153$e$-5, respectively. Another observation is that   the voltage phasors predicted by GCN and GRN could approximate the OPF results with much smaller MAPE, e.g. 0.9991\% and 1.6357\% at $H=2$, compared with other methods, e.g. 6.5671\% of FNN and  9.8635\% of CNN. We illustrate  examples by the GCN method for the IEEE 118-bus system in Fig. \ref{estimation_forecast} to show the ground-truth fully-observed voltage  phases with the predicted  ones with $H=0, 1$, which shows the predicted  voltage phases are very close to the ground-truth. We also  observe that the performance of GRN and GCN are similar for short-time forecasting, while GRN outperforms GCN in the long-time forecasting task (e.g. 1.5054\% compared with 2.0542\% of MAPE for $H=4$). Here, we emphasize that  the reason why GCN and GRN have  very small MSE and MAPE for forecasting is that  voltage phasors have a  constrained (low variance) distribution, i.e., power flow constraints, which has the support of the graph. Therefore, the GCN is most effective at internalizing the distribution and  approximating the Bayesian MSE estimator by capturing the spatiotemporal correlations.

\begin{table}[!htb]\scriptsize
\center
\vspace{-0.3cm}
\begin{threeparttable}
\caption{PMU Installed Buses in the Distribution Network}
\label{pmu_selec}
\centering
\begin{tabular}{c |c  }
\toprule
Systems&  Bus Name   \\
\midrule
123-bus   & 1.1, 1.2, 1.3,  2.2, 3.3, 7.1, 7.2, 7.3, 4.3, 5.3, 6.3, 8.1, 8.2, 8.3, 10.1, 12.2 \\
DG with &   13.1, 13.2, 13.3,  9r.1, 14.1, 34.3, 18.1, 18.2, 18.3, 11.1, 15.3, 16.3, 17.3 \\
PMUs & 9.1, 19.1, 150.1, 150.2, 150.3, 150r.1, 150r.2, 150r.3,  149.1, 149.2, 149.3\\
\bottomrule
\end{tabular}
\end{threeparttable}
\vspace{-0.4cm}
\end{table}

\subsection{GCN-DRL and GRN-DRL for Voltage Control}
In this section, we compare the performance of voltage control DRL strategies using GCN and GRN that we propose with benchmark algorithms and study its learning stability in the training phase. In Fig. \ref{Figure4b} we also  validate  the efficacy of the reduced GSO \textbf{Lemma 2} in
 Section \ref{sec:reducedGSO}.

\subsubsection{Policy Training}
To validate the advantages of the proposed GCN and GRN over the state of the art, the DRL scheme we showcase is an instance of the popular Proximal Policy Optimization (PPO) \cite{schulman2017proximal}.
We compare the proposed GCN-DRL and GRN-DRL architecture with existing DRL methods for voltage regulation. As PPO outputs are discrete actions, we discretize the actions space $[-1, 1]$ with  spacing $0.2$.

\subsubsection{Experiment Setup}
The DRL experiments
are run on the 123-bus feeder distribution network test case. We use demand data from Austin in the OpenEI\footnote{\url{https://data.openei.org/data_lakes#Data-Lakes-Datasets}}, and
historical PV data for training and testing, with three PV smart inverters installed in the load buses (Buses 51, 53, 60) in Fig. \ref{Figure4a} and six PV smart inverters  (Buses 69,  51,  52,   82,  68,  94) in Fig. \ref{Figure4c}, respectively.
We use OpenDSS to estimate the grid state.
We set the desired voltage magnitude $\bar{v} = 1$ p.u.  We test Lemma \ref{lemma1} in Section \ref{sec:reducedGSO} and apply the proposed Kron-reduction to design the GSO (see Figure \ref{Figure4b}).  The measurement buses, shown in TABLE \ref{pmu_selec}, are selected according to the algorithm in Section III.A.  In particular, we select  40 phases from 278 phases ($\approx$ 14\% of the buses) in the three-phase 123-bus feeder system.  Considering that real distribution feeders include thousands of buses, this would bring the cost for PMU measurements systems to reasonable levels. 
The DRL parameters are as follows.  The learning rate is 0.0007. The discounted factor $\gamma$ is 0.99. The PPO clip parameter $\epsilon$ is 0.1, the entropy loss weight is 0.01 and value loss weight is 1.  There are 10 spatial and temporal channels for both GCN and GRN layers performing the feature extraction, followed by 512 neurons in an  FNN layer followed by the output layer.

\subsubsection{DRL regulation results}
The learning curves of GRN-DRL, GCN-DRL, FNN-DRL, and CNN-DRL with the full  and partial observations are shown in  Fig. \ref{Figure4a} and Fig. \ref{Figure4b}, respectively. The two figures show the average training reward, where the bands represent the standard deviation over 5 runs.
In particular, the results in Fig. \ref{Figure4a} show that the voltage  deviations  of GRN and GCN, i.e., $\sum_{n_\phi \in \mathcal{N}_s} \abs{r_{n_\phi}}$,  are    0.0332 p.u. and  0.0398 p.u., which outperform FNN and CNN that  have 0.0504 p.u. and 0.0657 p.u., respectively. Another observation is that GRN and GCN are competitive in convergence time and performance. 
With the partial observations (40 out of 278), the results in Fig.\ref{Figure4b}  show that  $\sum_{n_\phi \in \mathcal{N}_s} \abs{r_{n_\phi}}$ of   GCN and GRN converge into 0.0492 p.u.  and 0.0469 p.u., respectively. However, $\sum_{n_\phi \in \mathcal{N}_s} \abs{r_{n_\phi}}$ of FNN and CNN converge into 0.0784 p.u.  and  0.0697 p.u., respectively.  
With 6 smart inverters, the test in Fig. \ref{Figure4c} shows the learning curves of  FNN and CNN decreases after 50 episodes, which indicates that they trigger the deadly triad of DRL. In contrast, GCN and GRN continue to have excellent performance, demonstrating that they do enhance the stability of DRL.

\vspace{-0.20cm}

\section{Conclusions}
In this paper we proposed novel physics-aware   GCN and GRN  frameworks for single and three-phase power systems. The proposed architectures are shown to be more effective than conventional NNs in extracting spatio-temporal features from the voltage phasors, in  forecasting and control applications.  Moreover, we show that even having roughly 14\% of the state values measurements leads to excellent performance compared to other benchmarks, i.e., FNN, CNN, RNN and 1st-GCN for the aforementioned applications.

\appendix

\subsection{Proof of Lemma 1}
Here we obtain two decoupled real equations  describing the dependence between the active and reactive power and the magnitude and phases of the state vector. 
To do so, we will be using the expansion $e^{jx} = 1+jx$ for phase terms of the three-phase state sub-vectors in the products $\bm{v}_n\bm{v}_{n}^H$ and $\bm{v}_n\bm{v}_{m}^H$, after re-centering them around the phases of a balanced system. Let $\Psi_n^{(3)} \triangleq \text{diag}([1, e^{-\mathfrak{j}2\pi/3}, e^{\mathfrak{j}2\pi/3}]^\top)$, 
$c_c=\cos(\frac{2\pi}{3})$ and $c_s=\sin(\frac{2\pi}{3})$. $\mathbb{1}$ is the all-ones vector and $\mathbb{1}\mathbb{1}^\top$ is the all-ones matrix.
In the following, we assume $\Psi_n^{(3)} = \Psi_m^{(3)}$ but we could similarly account for other shifts modeling  specific electrical elements, such as transformers.
We will make use of the following propositions. With $\mathbf{A}$, $\mathbf{B}$, $\mathbf{C}$ and $\mathbf{E}$ real square matrices, and $\bm{a}$ and $\bm{b}$ are real vectors, the following holds: 
\begin{proposition}
	(\textbf{P2}) If $\mathbf{C}$ and $\mathbf{E}$ are diagonal matrices, then $\mathbf{C}(\mathbf{A}\circ \mathbf{B})\mathbf{E} = \mathbf{A}\circ (\mathbf{C}\mathbf{B}\mathbf{E})$.
\end{proposition}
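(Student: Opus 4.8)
The plan is to verify this identity entry by entry, since both the Hadamard product and ordinary matrix multiplication are defined coordinate-wise, and two matrices are equal if and only if all of their entries agree. The single fact I would isolate first is the rescaling action of diagonal matrices: writing $\mathbf{C}=\text{diag}(c_1,\dots,c_N)$ and $\mathbf{E}=\text{diag}(e_1,\dots,e_N)$, left multiplication by $\mathbf{C}$ rescales row $i$ by $c_i$ and right multiplication by $\mathbf{E}$ rescales column $j$ by $e_j$, so that for \emph{any} square matrix $\mathbf{M}$ one has $[\mathbf{C}\mathbf{M}\mathbf{E}]_{ij}=c_i\,e_j\,M_{ij}$. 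This is the only place the diagonal hypothesis on $\mathbf{C}$ and $\mathbf{E}$ is used, and it follows immediately from $C_{i\ell}=c_i\delta_{i\ell}$ and $E_{\ell j}=e_j\delta_{\ell j}$ after collapsing the two summations.

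With this fact in hand I would simply evaluate both sides at a generic index pair $(i,j)$. For the left-hand side I take $\mathbf{M}=\mathbf{A}\circ\mathbf{B}$, giving
\begin{equation}
[\mathbf{C}(\mathbf{A}\circ\mathbf{B})\mathbf{E}]_{ij}=c_i\,e_j\,[\mathbf{A}\circ\mathbf{B}]_{ij}=c_i\,e_j\,A_{ij}B_{ij}. \notag
\end{equation}
For the right-hand side I first apply the rescaling fact to $\mathbf{M}=\mathbf{B}$ to get $[\mathbf{C}\mathbf{B}\mathbf{E}]_{ij}=c_i\,e_j\,B_{ij}$, and then take the entry-wise product with $\mathbf{A}$,
\begin{equation}
[\mathbf{A}\circ(\mathbf{C}\mathbf{B}\mathbf{E})]_{ij}=A_{ij}\,[\mathbf{C}\mathbf{B}\mathbf{E}]_{ij}=A_{ij}\,c_i\,e_j\,B_{ij}. \notag
\end{equation}
The two expressions coincide for every $(i,j)$, which establishes the matrix identity.

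There is essentially no hard step here: the statement is a routine consequence of the row/column rescaling action of diagonal matrices together with the entry-wise definition of $\circ$. The only subtlety worth flagging is that the hypothesis cannot be relaxed — if $\mathbf{C}$ or $\mathbf{E}$ were not diagonal, then $[\mathbf{C}\mathbf{M}\mathbf{E}]_{ij}$ would couple off-diagonal entries of $\mathbf{M}$ and would no longer factor as a single scalar multiple of $M_{ij}$, so the identity would fail in general. In the intended use this is exactly the manipulation needed to move the diagonal magnitude factors $\text{diag}(\abs{\bm v_n})$ and $\text{diag}(\abs{\bm v_m})$ through the Hadamard structure appearing in the proof of Lemma 1, so I would state \textbf{(P2)} at this level of generality and invoke it directly in the quadratic-term approximations.
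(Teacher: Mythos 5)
Your proof is correct: the entry-wise verification via $[\mathbf{C}\mathbf{M}\mathbf{E}]_{ij}=c_i e_j M_{ij}$ is exactly the standard argument, and the paper itself states \textbf{(P2)} without proof, treating it as an elementary fact of precisely this kind. Your write-up simply makes explicit what the paper leaves implicit (and your remark about why the diagonal hypothesis is essential is a sensible addition), so there is nothing to correct.
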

\begin{corollary}
	(\textbf{C1}) If $\mathbf{C}$ and $\mathbf{E}$ are diagonal matrices, then $\mathbf{C}\mathbf{A}\mathbf{E} = \mathbf{A}\circ (\mathbf{C}(\mathbb{1}\mathbb{1}^\top)\mathbf{E}) =\mathbf{A}\circ (\text{diag}(\mathbf{C})(\text{diag}(\mathbf{E}))^\top ) $.
\end{corollary}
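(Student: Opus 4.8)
The plan is to derive (C1) as an immediate specialization of the Hadamard--multiplication identity (P2), exploiting the fact that the all-ones matrix is the identity element for the Hadamard product. First I would observe that, because $\mathbb{1}\mathbb{1}^\top$ has every entry equal to one, any real square matrix satisfies $\mathbf{A} = \mathbf{A}\circ(\mathbb{1}\mathbb{1}^\top)$. Substituting this trivial rewriting into the left-hand side of the claim converts the ordinary product $\mathbf{C}\mathbf{A}\mathbf{E}$ into the sandwiched Hadamard product $\mathbf{C}\bigl(\mathbf{A}\circ(\mathbb{1}\mathbb{1}^\top)\bigr)\mathbf{E}$, which is precisely the form to which (P2) applies.

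Invoking (P2) with the choice $\mathbf{B}=\mathbb{1}\mathbb{1}^\top$ then yields
\begin{equation}
\mathbf{C}\mathbf{A}\mathbf{E} = \mathbf{C}\bigl(\mathbf{A}\circ(\mathbb{1}\mathbb{1}^\top)\bigr)\mathbf{E} = \mathbf{A}\circ\bigl(\mathbf{C}(\mathbb{1}\mathbb{1}^\top)\mathbf{E}\bigr),\notag
\end{equation}
which is the first of the two asserted equalities. The second equality reduces to evaluating the product $\mathbf{C}(\mathbb{1}\mathbb{1}^\top)\mathbf{E}$ for diagonal $\mathbf{C}$ and $\mathbf{E}$. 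Writing $\bm c = \text{diag}(\mathbf{C})$ and $\bm e = \text{diag}(\mathbf{E})$ for their diagonal vectors, a one-line entrywise computation gives $\bigl(\mathbf{C}(\mathbb{1}\mathbb{1}^\top)\mathbf{E}\bigr)_{ij} = c_i\,e_j$, so that $\mathbf{C}(\mathbb{1}\mathbb{1}^\top)\mathbf{E} = \bm c\,\bm e^\top = \text{diag}(\mathbf{C})(\text{diag}(\mathbf{E}))^\top$, closing the chain of equalities.

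The argument is essentially mechanical, so I do not anticipate a genuine obstacle; the only point requiring care is notational. The statement overloads $\text{diag}(\cdot)$ to denote the vector of diagonal entries of a matrix (the operation written $D(\cdot)$ in the nomenclature), whereas $\text{diag}(\cdot)$ is otherwise reserved for the inverse map sending a vector to a diagonal matrix. I would therefore make this reading explicit, so that $\text{diag}(\mathbf{C})(\text{diag}(\mathbf{E}))^\top$ is unambiguously the rank-one outer product $\bm c\,\bm e^\top$. With that convention fixed, both displayed equalities follow directly from (P2) together with the all-ones rewriting, and no estimates or case analysis are needed.
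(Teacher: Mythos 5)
Your proof is correct and takes essentially the same route the paper intends: the paper states (C1) as an immediate corollary of (P2), obtained exactly as you do by writing $\mathbf{A}=\mathbf{A}\circ(\mathbb{1}\mathbb{1}^\top)$, applying (P2) with $\mathbf{B}=\mathbb{1}\mathbb{1}^\top$, and evaluating $\mathbf{C}(\mathbb{1}\mathbb{1}^\top)\mathbf{E}$ as the outer product of the diagonal vectors. Your observation that $\text{diag}(\cdot)$ in the corollary must be read as the diagonal-extraction operator (the paper's $D(\cdot)$) is also the correct resolution of the paper's notational overloading.
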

\begin{proposition}
	(\textbf{P4}) $D(\mathbf{A}\mathbf{B})=  \sum_j (\mathbf{A}\circ \mathbf{B}^\top)_{ij}$.
\end{proposition}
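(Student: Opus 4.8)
The plan is to establish the identity (\textbf{P4}) component-by-component, since both sides are vectors indexed by a single (row) index $i$, and the claim is that their $i$-th entries agree for every $i$. First I would unfold the $i$-th entry of $D(\mathbf{A}\mathbf{B})$ using the definition of $D(\cdot)$ as the diagonal-extraction operator together with the definition of matrix multiplication, giving $[D(\mathbf{A}\mathbf{B})]_i = [\mathbf{A}\mathbf{B}]_{ii} = \sum_j A_{ij} B_{ji}$. This reduces the statement to showing that the scalar $\sum_j A_{ij} B_{ji}$ coincides with the $i$-th entry of the right-hand side, namely the row-sum $\sum_j (\mathbf{A}\circ\mathbf{B}^\top)_{ij}$.

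The single substantive step is the transpose bookkeeping: I would rewrite $B_{ji} = (\mathbf{B}^\top)_{ij}$, so that the summand becomes $A_{ij}(\mathbf{B}^\top)_{ij}$, which is by definition the $(i,j)$ entry of the Hadamard (entrywise) product $\mathbf{A}\circ\mathbf{B}^\top$. Summing over the free index $j$ then yields $\sum_j A_{ij}(\mathbf{B}^\top)_{ij} = \sum_j (\mathbf{A}\circ\mathbf{B}^\top)_{ij}$, which is exactly the $i$-th entry of the right-hand side. Since $i$ was arbitrary, the two vectors agree entrywise and the identity follows.

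There is no genuine obstacle here: (\textbf{P4}) is a bookkeeping identity whose entire content is that extracting the diagonal of a matrix product equals summing, along rows, the entrywise product of the first factor with the transpose of the second. The only point requiring care is to hold $i$ fixed while letting $j$ range, and to transpose $\mathbf{B}$ so that its column index aligns with the column index of $\mathbf{A}$ rather than its row index; otherwise one would be computing an off-diagonal entry. The real value of the identity is downstream rather than intrinsic, namely that it is the reformulation which later lets the proof of Lemma 1 convert the scalar power injections $\bm{s}_{n} = {D}\left(\bm{v}_n \bm{i}_{n}^H\right)$, defined as diagonals of products involving the voltage outer products and the admittance blocks, into Hadamard expressions on which the first-order phase expansion $e^{\mathfrak{j}x}\approx 1+\mathfrak{j}x$ can be applied term by term.
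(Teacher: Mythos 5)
Your proof is correct and complete: expanding $[D(\mathbf{A}\mathbf{B})]_i = \sum_j A_{ij}B_{ji} = \sum_j A_{ij}(\mathbf{B}^\top)_{ij} = \sum_j(\mathbf{A}\circ\mathbf{B}^\top)_{ij}$ is exactly the content of (\textbf{P4}). The paper itself states this proposition without proof, treating it as a bookkeeping identity and invoking it directly (for instance as the first step in its proof of (\textbf{P5})), so your entrywise verification is precisely the argument the paper implicitly relies on, and there is nothing to reconcile between the two.
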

\begin{proposition}
	(\textbf{P5}) If $\mathbf{B}$ and $\mathbf{C}$ are symmetric, $D((\mathbf{A}\circ \mathbf{B})\mathbf{C}) = D(\mathbf{A}(\mathbf{B}\circ \mathbf{C})) =  D(\mathbf{A}(\mathbf{B}\circ \mathbf{C})^\top)$.
\end{proposition}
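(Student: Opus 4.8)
The plan is to reduce all three matrix identities to a single scalar statement about diagonal entries, since $D(\cdot)$ merely extracts the diagonal, i.e. $[D(\mathbf{M})]_i = M_{ii}$. Expanding both the ordinary product and the Hadamard product entrywise, I would first compute the $i$-th diagonal entry of each of the three expressions directly from the definitions, deferring any use of the symmetry hypotheses until the very end.

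For the left-hand side, $[(\mathbf{A}\circ\mathbf{B})\mathbf{C}]_{ii} = \sum_j (\mathbf{A}\circ\mathbf{B})_{ij}\,C_{ji} = \sum_j A_{ij} B_{ij} C_{ji}$. For the first middle term, $[\mathbf{A}(\mathbf{B}\circ\mathbf{C})]_{ii} = \sum_j A_{ij}\,(\mathbf{B}\circ\mathbf{C})_{ji} = \sum_j A_{ij} B_{ji} C_{ji}$. For the transposed term, $[\mathbf{A}(\mathbf{B}\circ\mathbf{C})^\top]_{ii} = \sum_j A_{ij}\,(\mathbf{B}\circ\mathbf{C})_{ij} = \sum_j A_{ij} B_{ij} C_{ij}$. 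At this point the three sums differ only in whether $B$ and $C$ carry the index order $ij$ or $ji$, so the identity is essentially proved modulo relabeling.

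The two equalities then follow by invoking exactly one symmetry apiece: using $B_{ji}=B_{ij}$ (symmetry of $\mathbf{B}$) converts the first middle sum into $\sum_j A_{ij} B_{ij} C_{ji}$, matching the left-hand side; using $C_{ij}=C_{ji}$ (symmetry of $\mathbf{C}$) converts the transposed sum into the same quantity. Since this holds for every $i$, the vector identities follow. Alternatively, the whole computation collapses through Proposition \textbf{P4}: applying $D(\mathbf{X}\mathbf{Y}) = \sum_j (\mathbf{X}\circ\mathbf{Y}^\top)_{ij}$ with $(\mathbf{X},\mathbf{Y}) = (\mathbf{A}\circ\mathbf{B},\,\mathbf{C})$ and with $(\mathbf{X},\mathbf{Y})=(\mathbf{A},\,\mathbf{B}\circ\mathbf{C})$, together with $(\mathbf{B}\circ\mathbf{C})^\top = \mathbf{B}^\top\circ\mathbf{C}^\top$, reduces each expression to a triple Hadamard product whose entrywise diagonal sums coincide under the symmetry hypotheses.

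There is no substantive obstacle here; \textbf{P5} is a bookkeeping identity feeding into the proof of Lemma 1. The only points requiring care are matching each equality to the correct hypothesis — the first equality genuinely needs only symmetry of $\mathbf{B}$ and the second only that of $\mathbf{C}$ — and tracking the index order when transposing $\mathbf{B}\circ\mathbf{C}$, since $(\mathbf{B}\circ\mathbf{C})^\top$ swaps the row and column roles inside the Hadamard product. I would present the direct entrywise computation as the cleanest argument, noting the \textbf{P4} shortcut as a one-line alternative.
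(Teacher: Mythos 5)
Your proof is correct and is essentially the paper's own argument: the paper applies \textbf{P4} to rewrite each diagonal as a sum of triple Hadamard entries of the form $A_{ij}B_{ij}C_{ji}$ and then shuffles transposes using Hadamard commutativity and the symmetry of $\mathbf{B}$, which is precisely your entrywise computation phrased at the matrix level (the ``\textbf{P4} shortcut'' you yourself point out). If anything, yours is slightly more complete: the paper's displayed chain only establishes $D((\mathbf{A}\circ\mathbf{B})\mathbf{C}) = D(\mathbf{A}(\mathbf{B}\circ\mathbf{C}))$ via symmetry of $\mathbf{B}$ and leaves the transposed expression $D(\mathbf{A}(\mathbf{B}\circ\mathbf{C})^\top)$ implicit, whereas you verify it explicitly, matching it to the left-hand side using only the symmetry of $\mathbf{C}$.
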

\emph{proof:}
\begin{align}
&D((\mathbf{A}\circ \mathbf{B})\mathbf{C}) = \sum_i((\mathbf{A}\circ \mathbf{B}) \circ \mathbf{C}^\top)_{ij} =  \sum_i(\mathbf{A}\circ (\mathbf{B} \circ \mathbf{C}^\top))_{ij}\notag\\
&= \sum_i(\mathbf{A}\circ (\mathbf{B}^\top \circ \mathbf{C}^\top))_{ij}= \sum_i(\mathbf{A}\circ ( \mathbf{C}^\top \circ \mathbf{B}^\top ) )_{ij}     \notag\\
&=\sum_i(\mathbf{A}\circ ( \mathbf{C} \circ \mathbf{B} )^\top )_{ij} = D(\mathbf{A} ( \mathbf{C} \circ \mathbf{B} ))= D(\mathbf{A} ( \mathbf{B} \circ  \mathbf{C}  )).\notag
\end{align}
\begin{proposition}
	(\textbf{P5}) $D(\bm{a}\bm{b}^T)\!=\! \text{diag}(\bm{b})\bm{a}$ ,  $D(\mathbf{A})\!=\!D(\mathbf{A}^\top)$.
\end{proposition}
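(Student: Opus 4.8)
The plan is to verify both identities componentwise, directly from the paper's conventions that $D(\cdot)$ returns the column vector of diagonal entries and $\text{diag}(\cdot)$ maps a vector to the corresponding diagonal matrix. Neither claim needs the Hadamard-product machinery of \textbf{P2}, \textbf{C1}, \textbf{P4} or the earlier \textbf{P5}; each follows by comparing the $i^{th}$ entry of the two sides, so I would treat the whole proposition as routine index bookkeeping.

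For the first identity, I would start from the generic entry of the outer product, $[\bm{a}\bm{b}^\top]_{ij} = a_i b_j$. Restricting to $j=i$ gives $[\bm{a}\bm{b}^\top]_{ii} = a_i b_i$, so $D(\bm{a}\bm{b}^\top)$ is the vector whose $i^{th}$ component is $a_i b_i$. On the right-hand side, $\text{diag}(\bm{b})$ carries $b_i$ in its $(i,i)$ slot and zeros elsewhere, hence $[\text{diag}(\bm{b})\bm{a}]_i = b_i a_i$. Since $a_i b_i = b_i a_i$ for real scalars, the two vectors agree entry by entry, which establishes $D(\bm{a}\bm{b}^\top) = \text{diag}(\bm{b})\bm{a}$.

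For the second identity, I would simply note that transposition leaves the diagonal fixed, i.e. $[\mathbf{A}^\top]_{ii} = [\mathbf{A}]_{ii}$ for every $i$, so $D(\mathbf{A})$ and $D(\mathbf{A}^\top)$ have identical components and coincide as vectors.

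There is essentially no obstacle here — the only point worth watching is dimensional consistency under the stated conventions, so that $\text{diag}(\bm{b})\bm{a}$ and $D(\bm{a}\bm{b}^\top)$ are both genuine column vectors of the same length. Both facts are the kind of entrywise identities that the proof of Lemma~1 invokes when folding diagonal matrices in and out of Hadamard products.
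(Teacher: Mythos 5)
Your proof is correct, and since the paper states this proposition without proof (unlike the other, identically labeled \textbf{P5} about Hadamard products, which does get a derivation), your entrywise verification --- $[\bm{a}\bm{b}^\top]_{ii} = a_i b_i = [\text{diag}(\bm{b})\bm{a}]_i$ and $[\mathbf{A}^\top]_{ii} = [\mathbf{A}]_{ii}$ --- is exactly the routine index bookkeeping the paper implicitly relies on. Nothing further is needed.
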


Now, we are ready to introduce how to design the GSO. We will refer to
the specific propositions or corollary in each equation, such as {\color{blue}\textbf{P2}} or {\color{blue}\textbf{C1}}, with blue color.
By adding and subtracting from the phase angle in $\bm{v}_n$, we obtain:
\begin{equation}
	\bm{v}_n = \Psi_n^{(3)} \text{diag}(\abs{\bm{v}_{n}}) \begin{bmatrix} e^{\mathfrak{j}\varphi_{n_a}}  \\  e^{\mathfrak{j}\varphi_{n_b}} \\  e^{\mathfrak{j}\varphi_{n_c}}   \end{bmatrix}\\
	= \Psi_n^{(3)}  \text{diag}(\abs{\bm{v}_n}) e^{\mathfrak{j}\bm{\varphi}_n}
\end{equation}
Therefore, the outer product $\bm{v}_n\bm{v}_m^H$ are: 
\begin{align}
	&\bm{v}_n\bm{v}_m^H =  \Psi_n^{(3)}  \text{diag}(\abs{\bm{v}_n}) e^{\mathfrak{j}(\bm{\varphi}_n\mathbb{1}^\top - \mathbb{1}\bm{\varphi} _m^\top )} \text{diag}(\abs{\bm{v}_m}) (\Psi_m^{(3)} )^H \notag\\
	& = \overbrace{\text{diag}(\abs{\bm{v}_n}) \Psi_n^{(3)}}^{{\color{blue}\mathbf{C} \text{ in }  \textbf{C1}}} 
	\big( \overbrace{\mathbb{1}\mathbb{1}^\top + \mathfrak{j}(\bm{\varphi}_n\mathbb{1}^\top - \mathbb{1}\bm{\varphi} _m^\top )}^{{\color{blue}\mathbf{A} \text{ in }  \textbf{C1}}}\big)   \label{outv1} \\
	& \big[\overbrace{(\Psi_m^{(3)} )^H \text{diag}(\abs{\bm{v}_m})}^{{\color{blue}\mathbf{E} \text{ in }  \textbf{C1}}}\big] \stackrel{\color{blue}\textbf{C1}}{=} \left(\mathbb{1}\mathbb{1}^\top + \mathfrak{j}(\bm{\varphi}_n\mathbb{1}^\top \!\!\!-\! \mathbb{1}\bm{\varphi} _m^\top )\right) \\
	&\circ\Big(\text{diag}(\abs{\bm{v}_n})
	\Gamma \text{diag}(\abs{\bm{v}_m})\Big)\label{outv2}
\end{align}
where $\Gamma\triangleq \Psi_n^{(3)} (\mathbb{1}\mathbb{1}^\top) (\Psi_m^{(3)} )^H$ and can be expressed as:
\begin{equation}
\begin{split}\small
 [\Gamma]_{k n} = e^{\mathfrak{j}\frac{2(k - n)\pi}{3}} = [\Gamma_c]_{kn}  + \mathfrak{j}[\Gamma_s]_{kn},~ k,n \in\{0,1,2\}\notag
\end{split}
\end{equation}
where it is easy to verify that $ \Gamma_c=\Gamma_c^\top$.
 

\subsection{Proof of Proposition 1}

\subsubsection{Active Power GSO}
Next we use the approximation in developing the component relative to the phase term we use the approximation\footnote{This is effective but not truly necessary since the multiplication with $\text{diag}(\abs{\bm{v}_n})$ could be used as part of the definition of the graph signal. } that  $\abs{\bm{v}_n} \approx \mathbb{1}$ and $\abs{\bm{v}_m} \approx \mathbb{1}$ in \eqref{outv2}. With this approximation, we substitute \eqref{outv2} in \eqref{sinj}. Therefore, the real part of the first term in \eqref{sinj} is
\begin{align}
	&\Re\bigg\{-D\bigg(\bm{v}_n\bm{v}_n^H \big(\frac{\mathfrak{j}}{2} \mathbf{B}^{s}_{mn} + \mathfrak{j}\mathbf{B}^{(n)}_{mn} \big)\bigg)\bigg\} = - D\bigg( \Re\bigg\{\Big(\big[\mathfrak{j}\mathbb{1}\mathbb{1}^\top   \notag\\
	&   (\bm{\varphi}_n\mathbb{1}^\top - \mathbb{1}\bm{\varphi}_n^\top )\big] \circ \Gamma \Big) \big(\frac{1}{2} \mathbf{B}^{s}_{mn} + \mathbf{B}^{(n)}_{mn} \big)  \bigg\} \bigg)=D\bigg(\Big(\mathbb{1}\mathbb{1}^\top  \notag\\
	&    \circ \Gamma_s + (\bm{\varphi}_n\mathbb{1}^\top - \mathbb{1}\bm{\varphi}_n^\top )\circ \Gamma_c  \Big)  \big(\frac{1}{2} \mathbf{B}^{s}_{mn} + \mathbf{B}^{(n)}_{mn} \big) \bigg)\label{pinj1st}
\end{align}
We separate the biased part that does not involve in $(\bm{\varphi}_n\mathbb{1}^\top - \mathbb{1}\bm{\varphi}_n^\top )$ from \eqref{pinj1st}, and define it as:
\begin{align}
	&\bm{p}^{inc}_n\triangleq D\bigg(\Gamma_s   \big(\frac{1}{2} \mathbf{B}^{s}_{mn} + \mathbf{B}^{(n)}_{mn} \big) \bigg)\label{pinc}.
\end{align}
where $ \mathbb{1}\mathbb{1}^\top \circ \Gamma_s = \Gamma_s$.
The remaining part of \eqref{pinj1st} involving $(\bm{\varphi}_n\mathbb{1}^\top - \mathbb{1}\bm{\varphi}_n^\top )$, denoted by $\bm{p}^{in}_n$,   can be expressed as:
\begin{align}
	&\bm{p}^{in}_n \triangleq D\bigg(\Big( \overbrace{(\bm{\varphi}_n\mathbb{1}^\top \!-\! \mathbb{1}\bm{\varphi}_n^\top )}^{{\color{blue}\mathbf{A} \text{ in }  \textbf{P4}}} \circ \overbrace{\Gamma_c}^{{\color{blue}\mathbf{B} \text{ in }  \textbf{P4}}}  \Big)  \overbrace{\big(\frac{1}{2} \mathbf{B}^{s}_{mn} + \mathbf{B}^{(n)}_{mn} \big)}^{{\color{blue}\mathbf{B} \text{ in }  \textbf{P4}}} \bigg) \label{eq45pin} \\
	& \stackrel{\color{blue}\textbf{P4}}{=}  D\bigg(\big( \bm{\varphi}_n\mathbb{1}^\top - \mathbb{1}\bm{\varphi}_n^\top \big) \big(\frac{1}{2} \!\!\overbrace{\hat{\mathbf{B}}^{s}_{mn}}^{\triangleq \Gamma_c \circ {\mathbf{B}}^{s}_{mn}} + \overbrace{\hat{\mathbf{B}}^{(n)}_{mn}}^{\triangleq \Gamma_c \circ {\mathbf{B}}^{(n)}_{mn}}\big)^\top\bigg) \label{eq48pin}\\
& \stackrel{\color{blue}\textbf{P5}}{=}  \text{diag}\bigg( \big(\frac{1}{2} \hat{\mathbf{B}}^{s}_{mn} + \hat{\mathbf{B}}^{(n)}_{mn}\big) \mathbb{1} \bigg) \bm{\varphi}_n - \big(\frac{1}{2} \hat{\mathbf{B}}^{s}_{mn} + \hat{\mathbf{B}}^{(n)}_{mn}\big)  \bm{\varphi}_n. \label{eq49pin}
\end{align}
As the Hadamard product commutes and $\Gamma_c$, $\mathbf{B}^{s}_{mn}$ and $\mathbf{B}_{mn}^{(n)}$  are symmetric,  $\hat{\mathbf{B}}^{s}_{mn}$ and $\hat{\mathbf{B}}^{(n)}_{mn}$ are  symmetric. 
Replacing $m$ with $n$ and $\big(\frac{1}{2} \hat{\mathbf{B}}^{s}_{mn} + \hat{\mathbf{B}}^{(n)}_{mn}\big)$ with $ \hat{\mathbf{B}}^{(m)}_{mn}$, this result applies also to $\bm{v}_{n}\bm{v}_m^H$. Thus, the real part of the $2^{nd}$ term in \eqref{sinj} is
\begin{align}
&\Re\left\{-D\Big(\bm{v}_n\bm{v}_m^H ~ \mathfrak{j}\mathbf{B}^{(m)}_{mn} \Big) \right\} = \notag\\
&=D\bigg(\Big( \Gamma_s + (\bm{\varphi}_n\mathbb{1}^\top - \mathbb{1}\bm{\varphi}_m^\top )\circ \Gamma_c  \Big)   \mathbf{B}^{(m)}_{mn}  \bigg)\label{pinj2st}
\end{align}
Likewise, we separate the bias, that is not part of $(\bm{\varphi}_n\mathbb{1}^\top - \mathbb{1}\bm{\varphi}_m^\top )$ from \eqref{pinj2st}, and define it as:
\begin{align}
	&\bm{p}^{ouc}_n\triangleq D\left(\Gamma_s    \mathbf{B}^{(m)}_{mn}  \right)\label{pinc}.
\end{align}
The remaining part of \eqref{pinj2st} that involves in $(\bm{\varphi}_n\mathbb{1}^\top - \mathbb{1}\bm{\varphi}_m^\top )$, denoted by $\bm{p}^{in}_n$,   can be expressed as:
\begin{align}
&D\bigg(\Big( (\bm{\varphi}_n\mathbb{1}^\top - \mathbb{1}\bm{\varphi}_m^\top )\circ \Gamma_c  \Big)   \mathbf{B}^{(n)}_{mn}  \bigg)\notag\\
& = \text{diag}\left(  \hat{\mathbf{B}}^{(m)}_{mn} \mathbb{1}\right) \bm{\varphi}_n -  \hat{\mathbf{B}}^{(m)}_{mn}  \bm{\varphi}_m 
\end{align}
Finally, by excluding $\bm{p}_n^{inc}$ and $\bm{p}_n^{ouc}$ from GSO, we have
\begin{align}
&\tilde{\bm{p}}_n \triangleq  \bm{p}_n - \bm{p}_n^{inc} - \bm{p}_n^{ouc} =  \bm{p}_n - \bm{p}_n^{cst} 	\notag\\
&\sum_{m\in \mathcal{N}_n} \bigg(\Big(\text{diag}\big( \frac{1}{2} \hat{\mathbf{B}}^{s}_{mn} + \hat{\mathbf{B}}^{(n)}_{mn}\big)\mathbb{1} +   \hat{\mathbf{B}}^{(m)}_{mn}\mathbb{1}\Big)  \bm{\varphi}_n \notag\\
&- \big(\frac{1}{2} \hat{\mathbf{B}}^{s}_{mn} + \hat{\mathbf{B}}^{(n)}_{mn}\big)  \bm{\varphi}_n - \hat{\mathbf{B}}^{(m)}_{mn}  \bm{\varphi}_m  \bigg)
\end{align}
To write it in a compact way, we have
\begin{align}
\tilde{\bm{p}} = \mathbf{\hat{B}} \bm{\varphi},
\end{align}
where $\tilde{\bm{p}}$, $\hat{\mathbf{B}}$ an $\bm{\varphi}$ are denoted by
\begin{align}
&\tilde{\bm{p}} =  \begin{bmatrix}
\tilde{\bm{p}}_1^\top,\cdots,
\tilde{\bm{p}}_{\abs{\mathcal{N}}}^\top
\end{bmatrix}^\top,
 \hat{\mathbf{B}} \triangleq ((\mathbb{1}\mathbb{1}^\top)_{N}  \otimes \Gamma_c)\circ \mathbf{B}\\
 &\bm{\varphi} =  \begin{bmatrix}
\bm{\varphi}_1^\top,\cdots,\bm{\varphi}_{\abs{\mathcal{N}}}^\top
\end{bmatrix}^\top,
\end{align}
and $\mathbf{\hat{B}}$ has the same structure with $\mathbf{{B}}$ with replacing ${\mathbf{B}}^s_{mn}$, ${\mathbf{B}}^{(n)}_{mn}$ and ${\mathbf{B}}^{(m)}_{mn}$ with $\hat{\mathbf{B}}^s_{mn}$, $\hat{\mathbf{B}}^{(n)}_{mn}$ and $\hat{\mathbf{B}}^{(m)}_{mn}$, respectively.

\subsubsection{Reactive Power GSO}
The reactive power analysis is similar to the active power analysis. In particular, we use the approximation that $\bm{\varphi}_n\mathbb{1}^\top - \mathbb{1}\bm{\varphi} _m^\top \approx \mathbf{0}$ in \eqref{outv2}, where $\mathbf{0}$ is the all-zeros matrix.  Therefore, the first  part of \eqref{sinj} is
\begin{align}
	&-D\bigg(\bm{v}_n\bm{v}_n^H \big(\frac{\mathfrak{j}}{2} \mathbf{B}^{s}_{mn} + \mathfrak{j}\mathbf{B}^{(n)}_{mn} \big)\bigg) \approx  -D \bigg(\mathbb{1}\mathbb{1}^\top \circ\notag\\
	& \Big(\overbrace{\text{diag}(\abs{\bm{v}_n})}^{{\color{blue}\mathbf{C} \text{ in }  \textbf{C1}}} \overbrace{\Gamma}^{{\color{blue}\mathbf{A} \text{ in }  \textbf{C1}}} \overbrace{\text{diag}(\abs{\bm{v}_n})}^{{\color{blue}\mathbf{E} \text{ in }  \textbf{C1}}}\Big) \big(\frac{\mathfrak{j}}{2} \mathbf{B}^{s}_{mn} + \mathfrak{j}\mathbf{B}^{(n)}_{mn} \big)\bigg)\notag\\
	&\stackrel{\color{blue}\textbf{C1}}{=} -D\bigg(\Big( (\abs{\bm{v}_n} \abs{\bm{v}_n}^\top   ) \circ (\Gamma_c + \mathfrak{j} \Gamma_s)\Big) \big(\frac{\mathfrak{j}}{2} \mathbf{B}^{s}_{mn} + \mathfrak{j}\mathbf{B}^{(n)}_{mn} \big)\bigg)\label{eq54re1}
\end{align}
Then we take the imaginary part of \eqref{eq54re1} as:
\begin{align}
	& -D\bigg(\Big( (\abs{\bm{v}_n} \abs{\bm{v}_n}^\top  )  \circ \Gamma_c  \Big)\big(\frac{1}{2} \mathbf{B}^{s}_{mn} + \mathbf{B}^{(n)}_{mn} \big)\bigg)\label{eq58q}\\
	& =-D\bigg((\abs{\bm{v}_n} \abs{\bm{v}_n}^\top  ) \circ \big(\frac{1}{2} \hat{\mathbf{B}}^{s}_{mn} + \hat{\mathbf{B}}^{(n)}_{mn} \big)\bigg)\label{eq59q}\\
	& \approx -D\bigg(  (\mathbb{1} \abs{\bm{v}_n}^\top   ) \circ \big(\frac{1}{2} \hat{\mathbf{B}}^{s}_{mn} + \hat{\mathbf{B}}^{(n)}_{mn} \big)\bigg).\label{eq59qv}	
\end{align}
 The process of transformation from \eqref{eq58q} to \eqref{eq59q} is similar to the transformation  from \eqref{eq45pin} to \eqref{eq48pin}. 
 From \eqref{eq59q} to \eqref{eq59qv}, we relax $\abs{\bm{v}_n} \abs{\bm{v}_n}^\top \approx \mathbb{1} \abs{\bm{v}_n}^\top$ by $\abs{\bm{v}_{n}}\approx \mathbb{1}$.

Replacing $m$ with $n$ this form applies also to $\bm{v}_{n}\bm{v}_m^H$. Therefore, the second imaginary  part of \eqref{sinj} is
\begin{align}
&\Im\left\{-D\left(\mathfrak{j}\bm{v}_n\bm{v}_m^H \mathbf{B}^{(m)}_{mn} \right)\right\} \approx -D\left(\big(\mathbb{1} \abs{\bm{v}_m}^\top   \big) \circ \hat{\mathbf{B}}^{(m)}_{mn}\right)\label{eq60q}
\end{align}
By summing \eqref{eq59qv} and \eqref{eq60q} together, we have 
\begin{align}
\bm{q}_{n} \approx 
-D\bigg((\mathbb{1} \abs{\bm{v}_n}^\top \pm \abs{\bm{v}_n} \mathbb{1}^\top )  \circ 
 \big(\frac{1}{2} \hat{\mathbf{B}}^{s}_{mn}\notag  \\
 + \hat{\mathbf{B}}^{(n)}_{mn} \big)\bigg) -  D\bigg((\mathbb{1} \abs{\bm{v}_m}^\top) \circ\hat{\mathbf{B}}^{(m)}_{mn} \bigg), \label{eq61}
\end{align}
where we add  and minus this item, i.e., $\abs{\bm{v}_n}\mathbb{1}^\top$, in order to split \eqref{eq61} into three parts, i.e., $\tilde{\bm{q}}_n^{ouc}$, $\tilde{\bm{q}}_n^{inc}$ and $\tilde{\bm{q}}_n^{cst}$. Specifically: 
\begin{align}
&  \tilde{\bm{q}}_n^{inc} \triangleq   -D\bigg((\mathbb{1} \abs{\bm{v}_n}^\top - \abs{\bm{v}_n}\mathbb{1}^\top  )  \circ  \big(\frac{1}{2} \hat{\mathbf{B}}^{s}_{mn} + \hat{\mathbf{B}}^{(n)}_{mn} \big)\bigg) \label{eqqinc}\\
& =D\bigg(( \abs{\bm{v}_n} \mathbb{1}^\top - \mathbb{1} \abs{\bm{v}_n}^\top  ) \big(\frac{1}{2} \hat{\mathbf{B}}^{s}_{mn} + \hat{\mathbf{B}}^{(n)}_{mn} \big)^\top\bigg) \stackrel{\color{blue}\textbf{P4}\&\textbf{P5}}{=}  \label{eqqinc2}\\  
&  \text{diag}\bigg( \big(\frac{1}{2} \hat{\mathbf{B}}^{s}_{mn} + \hat{\mathbf{B}}^{(n)}_{mn}\big)\mathbb{1}\bigg) \abs{\bm{v}_n} - \big(\frac{1}{2} \hat{\mathbf{B}}^{s}_{mn} + \hat{\mathbf{B}}^{(n)}_{mn}\big)  \abs{\bm{v}_n} \label{eqqinc3},
\end{align}
where the transformation from \eqref{eqqinc2} to \eqref{eqqinc3} is the same one from \eqref{eq48pin} to \eqref{eq49pin}.
Likewise,  we could replace $m$ with $n$ and $\left(\frac{1}{2} \hat{\mathbf{B}}^{s}_{mn} + \hat{\mathbf{B}}^{(n)}_{mn}\right)$ with $ \hat{\mathbf{B}}^{(m)}_{mn}$, and have the second part:
\begin{align}
& \tilde{\bm{q}}_n^{ouc} \triangleq D\bigg(( \abs{\bm{v}_n}\mathbb{1}^\top)  \circ     \overbrace{ \hat{\mathbf{B}}^{(m)}_{mn}}^{-\hat{\mathbf{B}}^{(n)}_{mn}} \bigg)  -D\left(( \mathbb{1}^\top \abs{\bm{v}_m}) \circ \hat{\mathbf{B}}^{(m)}_{mn} \right) \notag\\
&=\text{diag}\left( \hat{\mathbf{B}}^{(m)}_{mn} \mathbb{1}\right)  \abs{\bm{v}_n} - \hat{\mathbf{B}}^{(m)}_{mn} \abs{\bm{v}_m}\label{eqqouc}
\end{align}

The remaining part of Eq. \eqref{eq61} is
\begin{align}
\tilde{\bm{q}}_n^{cst} \triangleq -D\bigg((\abs{\bm{v}_n}\mathbb{1}^\top )  \circ   \frac{1}{2} \hat{\mathbf{B}}^{s}_{mn}   \bigg)
\end{align}
With $\abs{\bm{v}_{n}}\approx \mathbb{1}$, ${\bm{q}}_n^{cst}$ could be relaxed as a biased part that does not involve in $\abs{\bm{v}_n} \mathbb{1}^\top - \mathbb{1} \abs{\bm{v}_m}^\top $:
\begin{align}
\tilde{\bm{q}}_n^{cst} \approx {\bm{q}}_n^{cst} \triangleq   -D\bigg( \frac{1}{2} \hat{\mathbf{B}}^{s}_{mn}  \bigg) 
\end{align}
Finally, by excluding ${\bm{q}}_n^{cst}$ from the GSO, we have
\begin{align}
&\bm{q}_n - {\bm{q}}_n^{cst}  \approx \tilde{\bm{q}}_n^{inc}+ \tilde{\bm{q}}_n^{ouc} \triangleq \tilde{\bm{q}}_n =	\notag\\
&\sum_{m\in \mathcal{N}_n} \bigg(\text{diag}\Big( \big(\frac{1}{2} \hat{\mathbf{B}}^{s}_{mn} + \hat{\mathbf{B}}^{(n)}_{mn}\big) \mathbb{1} +   \hat{\mathbf{B}}^{(m)}_{mn}\mathbb{1}\Big)  \abs{\bm{v}_n} \notag\\
&- \big(\frac{1}{2} \hat{\mathbf{B}}^{s}_{mn} + \hat{\mathbf{B}}^{(n)}_{mn}\big)  \abs{\bm{v}_n} -  \hat{\mathbf{B}}^{(m)}_{mn}  \abs{\bm{v}_m}  \bigg)
\end{align}
In the same way with  active power injects, we have
\begin{align}
\tilde{\bm{q}} &= \mathbf{\hat{B}} \abs{\bm{v}},
\\
\tilde{\bm{q}} &=  \begin{bmatrix}
\tilde{\bm{q}}_1^\top,\cdots,
\tilde{\bm{q}}_{\abs{\mathcal{N}}}^\top
\end{bmatrix}^\top,
\abs{\bm{v}} =  \begin{bmatrix}
\abs{\bm{v}_1}^\top, \cdots ,
\abs{\bm{v}_{\abs{\mathcal{N}}}}^\top
\end{bmatrix}^\top.
\end{align}

\vspace{-0.2cm}
\begin{footnotesize}

\end{footnotesize}

\end{document}